\newtheorem{Theorem}{Theorem}
\newtheorem{Example}{Example}
\newtheorem{Proposition}{Proposition}
\newtheorem{Property}{Property}
\newtheorem{Lemma}{Lemma}
\begin{document}

\title
{Optimal Exact Repair Strategy for the Parity Nodes of the $(k+2,k)$ Zigzag Code
\author{Jie Li and Xiaohu Tang, \IEEEmembership{Member,~IEEE}}
\thanks{J. Li is with the Information Security and National Computing Grid Laboratory, Southwest Jiaotong University, Chengdu, 610031, China (e-mail: jieli873@gmail.com).}
\thanks{X.H. Tang is with the Information Security and National Computing Grid
Laboratory, Southwest Jiaotong University, Chengdu 610031, China, and also
with the Beijing Center for Mathematics and Information Interdisciplinary
Sciences, Beijing 100048, China (e-mail: xhutang@swjtu.edu.cn).}
}

\date{}
\maketitle

\begin{abstract}
In this paper, we reinterprets the $(k+2,k)$ Zigzag code in coding matrix and then propose an optimal exact repair strategy for its parity nodes, whose repair disk I/O approaches a lower bound derived  in this paper.
\end{abstract}

\begin{IEEEkeywords}
Distributed storage, MSR code, optimal repair, Zigzag code.
\end{IEEEkeywords}

\section{Introduction}

Distributed storage systems built on huge numbers of storage nodes have wide
applications in  peer-to-peer storage systems
such as OceanStore \cite{ocean}, Total Recall \cite{total} and DHash++
\cite{Dhash}. Erasure code, which can provide both protection against node failures and efficient data storage, is very common in  distributed storage systems
\cite{evenodd,Bcode,RDP,STAR,Bruck99,Xcode}.
For instance, as a special class of erasure code,   RAID-6 is a popular scheme for tolerating any two node failures \cite{RAID}.


Upon failure of a single node, a self-sustaining system should repair the failed node in order to retain the same redundancy. In the literature, there are mainly two repair types: exact repair and functional repair.
Compared with the latter, exact repair is preferred since it
does not incur additional significant system overhead by
regenerating the exact replicas of the lost data at the failed node \cite{survey}.
Generally speaking, there are several metrics to evaluate the performance of node repair, such as the \emph{repair bandwidth}, which is defined as the amount of data downloaded
from surviving nodes to repair a failed node, the \emph{disk I/O}, which is defined as the amount
of data  read.

Recently, Dimakis \emph{et al.} \cite{Dimakis}  introduced a new class of erasure code for distributed storage systems named \emph{minimum storage regenerating} (MSR) code.
The distributed storage system  deploys a $(k+r,k)$ MSR code  to store  a file of size $M = kN$ symbols across  $n$ nodes, each node keeping $N$ symbols.
The $(k+r,k)$ MSR code has the \emph{optimal repair property} that
the repair bandwidth $\gamma=\frac{d}{(d-k+1)}N$ is minimal, which is achieved  by downloading $N\over d-k+1$ symbols from each of any $k\le d\le k+r-1$ surviving nodes when repairing a failed node. In this paper, we only focus on the exact repair of  high rate MSR codes.
When $r=1$,  the repair bandwidth is the highest, i.e., $\gamma=M$. When $r=2$ and $d=k+1$, MSR code is very desirable since it can achieve the highest rate $k\over k+2$ for $\gamma=(k+1)N/2<M$. In addition, $(k+2,k)$ MSR code
can be alternative to RAID-6 schemes.

So far, several  explicit constructions of $(k+2,k)$ MSR codes have been presented \cite{invariant subspace,hadamard,Zigzag,Long}. Among them, the $(k+2,k)$ Zigzag code in \cite{Zigzag}, which is defined by a series of permutations, is of great interest because of:
\begin{enumerate}
  \item [(i)] Optimal update disk I/O property (also known as optimal update property in \cite{Zigzag}) that only  itself and one symbol at each parity node need an update  when a symbol in a systematic node is rewritten;
  \item [(ii)]  Optimal repair disk I/O property (also known as optimal rebuilding in \cite{Zigzag}) for systematic nodes that the repair disk I/O of a  systematic node is equal to the minimal repair bandwidth;
  \item [(iii)] Small alphabet size of $3$ so that it can be easily implemented;
  \item [(iv)] The storage $N=2^{k-1}$ achieves the theoretic lower bound on the storage per node  for $(k+2,k)$ MSR codes with both optimal update disk I/O and optimal repair disk I/O for systematic nodes \cite{Zigzag}.
\end{enumerate}
However, the parity nodes of the $(k+2,k)$ Zigzag code was trivially repaired by downloading all the original data in \cite{Zigzag}, i.e., the download bandwidth reaches
the maximal value $\gamma=M$.  In order to
acquire the optimal repair property for both systematic nodes and parity nodes, a $(k,k-2)$ MSR code was presented in  \cite{extend zigzag} based on a modification of the $(k+2,k)$ Zigzag code, but at cost of sacrificing two systematic nodes while maintaining the same storage per node $N=2^{k-1}$. It should be noted that only the $(k+2,k)$ Hadamard MSR code in \cite{hadamard}  shares the  optimally repair property of all the nodes in the all aforementioned codes.

In this paper, without changing the original structure of the $(k+2,k)$ Zigzag code, we propose an optimal repair strategy for the two parity nodes, whose download bandwidth achieves
the minimal value $\gamma=(k+1)N/2$. A comparison of the properties of various known $(k+2,k)$ MSR codes, such as the Zizag code employing our repair strategy,
the original Zigzag code \cite{Zigzag}, the modified Zigzag code \cite{extend zigzag}, and Hadamard code \cite{hadamard},
is given in Table \ref{table para compare}. It is seen that the new repair strategy does not lose any good properties of the original Zigzag code, for examples,
the optimal update disk I/O property, the optimal repair disk I/O property for systematic nodes, small alphabet size of $3$, and so on.
In contrast to the modified Zigzag code and Hadamard code
with the same optimal repair property of all nodes, the Zigzag code employing  the new  repair strategy
shows a clear advantage over the storage per node. Although the repair disk I/O of the parity node is not optimal, which is
$kN+N-k$, larger than the minimal repair bandwidth $(k+1)N/2$, it indeed approaches a lower bound on the disk I/O of Zigzag
code given in this paper.

\begin{table*}[htbp]\label{table para compare}
\centering
\caption{Comparison of the properties of some $(k+2,k)$ MSR codes where $q$ and $N$ denote the size of the finite field required and the storage per node, respectively.
}
\begin{tabular}{|c|c|c|c|c|c|c|c|}
\hline &&&&\multicolumn{2}{c|}{Optimal Repair Disk I/O}&\multicolumn{2}{c|}{Optimal Repair}\\
\cline{5-8}
&$q$& $N$ & Optimal Update Disk I/O & Systematic& Parity & Systematic &Parity\\
&&& & Nodes& Nodes & Nodes & Nodes\\
\hline Zizag Code &\multirow{2}{*}{$3$}&\multirow{2}{*}{$2^{k-1}$}&\multirow{2}{*}{Yes}&\multirow{2}{*}{Yes}&\multirow{2}{*}{No}&\multirow{2}{*}{Yes}&\multirow{2}{*}{Yes}\\
Employing  New Repair Strategy &&&&&&&\\
\hline Original Zigzag Code \cite{Zigzag}&$3$&$2^{k-1}$&Yes&Yes&No&Yes&No\\
\hline Modified Zigzag Code \cite{extend zigzag}&$3$&$2^{k+1}$&No &Yes&Yes&Yes&Yes\\
\hline Hadamard Code \cite{hadamard}&$2k+3$&$2^{k+1}$&Yes &No&No&Yes &Yes \\
\hline
\end{tabular}
\end{table*}

The rest of this paper is organized as follows. Section II introduces the structure of a $(k+2,k)$ MSR code and the necessary and sufficient conditions for optimal repair of parity nodes. Section III proposes the $(k+2,k)$ Zigzag code and reinterprets it in coding matrix. In Section IV, a lower bound on disk I/O to optimally repair the parity nodes of the $(k+2,k)$ Zigzag code is presented. The optimal repair strategy for the parity nodes of the $(k+2,k)$ Zigzag code is given in Section V.


\section{Optimal repair for parity nodes of $(k+2,k)$ MSR codes}
Let $q$ be a prime power and $\mathbf{F}_q$ be the finite field with $q$ elements.
Assume that a file of size $M=kN$ is equally partitioned into $k$ parts, respectively denoted by $\mathbf{f}_0, \mathbf{f}_1 ,\ldots, \mathbf{f}_{k-1}$, where $\mathbf{f}_j$ is a column vector of length $N$ for $0\le j<k$. The file is encoded to a $(k+2,k)$  MSR code and then stored across $k$ systematic and
two parity storage nodes, each node having storage $N$. The first $k$  nodes are systematic nodes, which store the file parts $\mathbf{f}_0,\mathbf{f}_1,\cdots,\mathbf{f}_{k-1}$ in an uncoded form respectively. Without loss of generality, assume that
the two parity nodes,  nodes $k$ and $k+1$,  respectively store $\mathbf{f}_k=\mathbf{f}_0+\mathbf{f}_1+\cdots+\mathbf{f}_{k-1}$  and $\mathbf{f}_{k+1}=A_0\mathbf{f}_0+A_1\mathbf{f}_1+\cdots+A_{k-1}\mathbf{f}_{k-1}$ for some $N\times N$ matrices $A_{0},\cdots,A_{k-1}$ over $\mathbf{F}_q$, where the matrix $A_{j}$ is called the \emph{coding matrix} for systematic node $j$, $0\le j< k$. To guarantee the MDS property, it is required that \cite{hadamard,new repair h}
\begin{eqnarray}\label{Eqn_MDS}
\mathrm{rank}(A_i)=\mathrm{rank}(A_i-A_j)=N, 0\le i\ne j<k.
\end{eqnarray}

Table I illustrates the structure of a $(k+2,k)$ MSR code.
\begin{table*}[htbp]\label{MSR_Model}
\begin{center}
\caption{Structure of a $(k+2,k)$ MSR code}
\begin{tabular}{|c|c|c|c|c|c|}
\hline Node 0 & Node 1 &$\cdots$ & Node $k-1$ &Node $k$ &Node $k+1$  \\
\hline $\mathbf{f}_0$ & $\mathbf{f}_1$ &$\cdots$ & $\mathbf{f}_{k-1}$ &$\mathbf{f}_k=\sum\limits_{i=0}^{k-1}\mathbf{f}_i$ & $\mathbf{f}_{k+1}=\sum\limits_{i=0}^{k-1}A_i\mathbf{f}_i$  \\
\hline
\end{tabular}
\end{center}
\end{table*}

When repairing a failed node $j$, the optimal repair property demands to download
half data from each surviving node $l$,  $0\le l\ne j< k+2$,  by
multiplying its original data $\mathbf{f}_l$ with an $N/2\times N$ matrix of rank $N/2$, called \emph{repair matrix}.
In what follows, we review the requirement on repair matrices for the optimal repair of parity nodes of a $(k+2,k)$ MSR code
\cite{hadamard, new repair h}.

Upon failure of the first parity  node (node $k$), respectively downloading $S_a\mathbf{f}_j$ and
$\tilde{S}_a\mathbf{f}_{k+1}$, $0\le j< k$, where  $S_a$ and $\tilde{S}_a$ are two $N/2\times N$  repair matrices of rank $N/2$, eventually one gets the following system of linear equations
\begin{eqnarray*}\label{Eqn download p1}
\left(
\begin{array}{c}
S_af_0 \\
\tilde{S}_af_{k+1}
\end{array}
\right)=\underbrace{\left(
\begin{array}{c}
S_a \\
\tilde{S}_aA_{0}
\end{array}
\right)\mathbf{f}_{k}}_{\mathrm{useful~ data}} - \sum_{l=1}^{k-1} \underbrace{\left( \begin{array}{c}
S_a \\
\tilde{S}_a (A_{0}-A_{l})
\end{array}
\right)\mathbf{f}_l}_{\mathrm{interference~ by}~\mathbf{f}_{l}}.
\end{eqnarray*}
To cancel all the interference terms and then recover the target data $\mathbf{f}_k$, the optimal repair requires \cite{hadamard,new repair h}
\begin{eqnarray}\label{repair_1parity_node_requirement1}
\textrm{rank} \left(\left(
\begin{array}{c}
S_a \\
\tilde{S}_aA_0
\end{array}
\right)\right) = N
\end{eqnarray}
and
\begin{eqnarray}\label{repair_1parity_node_requirement2}
\textrm{rank} \left(\left(
\begin{array}{c}
S_a \\
\tilde{S}_a (A_0- A_l)
\end{array}
\right)\right) = {N\over 2},\ \ 1\le l< k.
\end{eqnarray}
Clearly, the disk I/O to optimally repair the first parity node is $kN_1+N_2$ where $N_1$ and $N_2$
denote the  nonzero columns of $S_a$ and $\tilde{S}_a$ respectively.

To repair the  second parity node (node $k+1$),  downloading $(S_b A_j)\mathbf{f}_j$ and
$\tilde{S}_b\mathbf{f}_{k}$,  $0\le j< k$,  where $S_b$ and $\tilde{S}_b$ are two $N/2\times N$ matrices of rank $N/2$, one obtains the following system of linear equations
\begin{eqnarray*}\label{Eqn_download-1}
\left(
\begin{array}{c}
S_b A_0\mathbf{f}_{0}\\
\tilde{S}_b\mathbf{f}_{k}
\end{array}
\right)=\underbrace{\left(
\begin{array}{c}
S_b \\
\tilde{S}_bA_0^{-1}
\end{array}
\right)\mathbf{f}_{k+1}}_{\mathrm{useful~ data}} - \sum_{l=1}^{k-1} \underbrace{\left( \begin{array}{c}
S_b \\
\tilde{S}_b (A_0^{-1}-A_l^{-1})
\end{array}
\right)A_l\mathbf{f}_l}_{\mathrm{interference~ by}~\mathbf{f}_{l}}.
\end{eqnarray*}
Similarly,  optimal repair  demands \cite{hadamard,new repair h}
\begin{eqnarray}\label{repair_2parity_node_requirement1}
\textrm{rank} \left(\left(
\begin{array}{c}
S_b \\
\tilde{S}_bA_0^{-1}
\end{array}
\right)\right) = N
\end{eqnarray}
and
\begin{eqnarray}\label{repair_2parity_node_requirement2}
\textrm{rank} \left(\left(
\begin{array}{c}
S_b \\
\tilde{S}_b (A_0^{-1}-A_l^{-1})
\end{array}
\right)\right) = {N\over 2},\ \ 1\le l< k.
\end{eqnarray}
Accordingly, the disk I/O to optimally repair the second parity node is the total number of  nonzero columns of $\tilde{S}_b$ and $S_bA_i$, $0\le i< k$.

\section{Reinterpretation of $(k+2,k)$ Zigzag code in coding matrix}

Throughout this paper, let $k\ge 2$ and $N=2^{k-1}$. Given an integer $0\le i<N$, let $(i_1,\cdots,i_{k-1})$
be its binary expansion, i.e., $i=\sum\limits_{j=1}^{k-1}2^{k-1-j}i_{j}$. For simplicity, we do not distinguish a nonnegative integer $i$ and its binary expansion if the context is clear.

Let $\{e_j\}_{j=1}^{k-1}$ be the standard vector basis over $\mathbf{F}_2$ of dimension $k-1$, i.e.,
\begin{equation*}
    e_j=(\underbrace{0,\cdots,0,1,0,\cdots,0}\limits_{k-1}),\ \ 1\le j< k
\end{equation*}
with only the $j$th entry being nonzero. By convenience, set $e_0$ to be the all-zero vector.

In \cite{Zigzag}, the $(k+2,k)$ Zigzag code is characterized by the following permutation $P_j:\ [0,N-1]\rightarrow [0,N-1]$
\begin{equation*}\label{permutation}
P_j(x)=x\oplus e_j=\left\{
\begin{array}{cl}
(x_1,\cdots,x_{k-1}),& j=0\\
(x_1,\cdots,x_{j-1},x_{j}\oplus 1,x_{j+1},\cdots,x_{k-1}),& 0< j< k
\end{array}
\right.
\end{equation*}
where $\oplus$ denotes the addition in $\mathbf{F}_2$.
Obviously,
\begin{equation}\label{Eqn P and P-1}
  P_j^{-1}(x)=x\oplus e_j=P_j(x), \ 0\le j< k.
\end{equation}

For any integer $0\le l<N$, define $Z_l$ as $ Z_l=\{(i,j)|i=P_j^{-1}(l), 0\le j< k\}$,  i.e.,
\begin{equation*}\label{parity2}
  Z_l=\{(i,j)|i=l\oplus e_j, 0\le j< k\}
\end{equation*}
by \eqref{Eqn P and P-1}.
The structure of  the $(k+2,k)$ Zigzag code is depicted in Table II, where
the first parity node stores $f_{i,k}=\sum\limits_{j=0}^{k-1}f_{i,j}$ and the second parity node stores $f_{i,k+1}=\sum\limits_{(i,j)\in Z_l}\beta_{i,j} f_{i,j}$, $0\le i<N$
and $0\le j< k$,  $\beta_{i,j}=(-1)^{i\cdot \sum_{l=0}^{j}e_l}$, i.e.,
\begin{equation}\label{beta}
\beta_{i,j}=\left\{
\begin{array}{cl}
1, & \textrm{if} ~j=0\\
(-1)^{i_1+\cdots+i_j}, & \textrm{otherwise}
\end{array}
\right.
\end{equation}

In the following, we  reinterpret the data stored at the second parity node of the $(k+2,k)$ Zigzag code in the form of coding matrix
so that we can use Equations \eqref{repair_1parity_node_requirement1}-\eqref{repair_2parity_node_requirement2} to check the optimality of our new
repair matrices in the next section.

\begin{table*}[htbp]\label{Zigzag_Model}
\begin{center}
\caption{Structure of the $(k+2,k)$ Zigzag code}
\begin{tabular}{|c|c|c|c|c|}
\hline Node 0  &$\cdots$ & Node $k-1$ &Node $k$ &Node $k+1$  \\
\hline $f_{0,0}$ &$\cdots$ & $f_{0,k-1}$ &$f_{0,k}=\sum\limits_{j=0}^{k-1}f_{0,j}$ & $f_{0,k+1}=\sum\limits_{(i,j)\in Z_0}\beta_{i,j} f_{i,j}$  \\
\hline $f_{1,0}$  &$\cdots$ & $f_{1,k-1}$ &$f_{1,k}=\sum\limits_{j=0}^{k-1}f_{1,j}$ & $f_{1,k+1}=\sum\limits_{(i,j)\in Z_1}\beta_{i,j} f_{i,j}$  \\
\hline $\vdots$  &$\ddots$ & $\vdots$ &$\vdots$&$\vdots$  \\
\hline $f_{N-1,0}$ &$\cdots$ & $f_{N-1,k-1}$ &$f_{N-1,k}=\sum\limits_{j=0}^{k-1}f_{N-1,j}$ & $f_{N-1,k+1}=\sum\limits_{(i,j)\in Z_{N-1}}\beta_{i,j} f_{i,j}$  \\
\hline
\end{tabular}
\end{center}
\end{table*}

Given an integer $k\ge2$, recursively define $k$  matrices  $A^{(k)}_0,\cdots,A^{(k)}_{k-1}$ of order $N$ over $\mathbf{F}_3$ as
\begin{equation}\label{coding matrix}
  A^{(k)}_0=I_{2^{k-1}},\ \ A^{(k)}_1=\left(
                                    \begin{array}{cc}
                                       & -I_{2^{k-2}} \\
                                      I_{2^{k-2}} &  \\
                                    \end{array}
                                  \right), \ \ A^{(k)}_j=\left(
                                \begin{array}{cc}
                                  A^{(k-1)}_{j-1} &  \\
                                   & -A^{(k-1)}_{j-1} \\
                                \end{array}
                              \right)\mbox{\ for \ } 2\le j< k
\end{equation}
where
\begin{equation*}
  A^{(2)}_0=I_{2},\ \ A^{(2)}_1=\left(
                                \begin{array}{cc}
                                  0 & -1 \\
                                  1 & 0 \\
                                \end{array}
                              \right).
\end{equation*}

First of all, the following properties of the matrices in \eqref{coding matrix} are obvious.

\begin{Property}\label{Prop_Matrix}
For any $k\ge 2$, the matrix $A^{(k)}_{j}$ in  \eqref{coding matrix} with  $1\le j< k$ satisfies

(i) $(A^{(k)}_{j})^2=-I_{2^{k-1}}$;

(ii)  Both each row and each column of $A_j^{(k)}$ have only one nonzero entry.

\end{Property}

Next, we show that   the matrix $A^{(k)}_{j}$ in  \eqref{coding matrix} is just the
coding matrix for systematic node $j$ of the $(k+2,k)$ Zigzag code for all $0\le j< k$.

\begin{Theorem}\label{second parity}
The coding matrices of  the $(k+2,k)$ Zigzag code are $A_0^{(k)},\cdots,A_{k-1}^{(k)}$, i.e.,
\begin{equation*}
  \mathbf{f}_{k+1}=A_0^{(k)}\mathbf{f}_0+\cdots + A_{k-1}^{(k)}\mathbf{f}_{k-1}
\end{equation*}
where $\mathbf{f}_j=(f_{0,j},\cdots,f_{N-1,j})^T$.
\end{Theorem}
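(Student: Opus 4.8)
The plan is to verify directly that the coding matrix $A_j^{(k)}$ in \eqref{coding matrix} produces the entry-wise combination prescribed by the Zigzag code, namely that for each row index $l$ (with $0\le l<N$) the $l$th entry of $\sum_{j=0}^{k-1}A_j^{(k)}\mathbf{f}_j$ equals $f_{l,k+1}=\sum_{(i,j)\in Z_l}\beta_{i,j}f_{i,j}$. By Property \ref{Prop_Matrix}(ii), each $A_j^{(k)}$ is a signed permutation matrix, so the term $A_j^{(k)}\mathbf{f}_j$ contributes to row $l$ exactly one entry, $\pm f_{\pi_j(l),j}$ for some permutation $\pi_j$. Thus it suffices to prove two things: (a) the permutation realized by $A_j^{(k)}$ is precisely $P_j$ (i.e. $x\mapsto x\oplus e_j$), so that the set of index pairs contributing to row $l$ is exactly $Z_l=\{(l\oplus e_j,j):0\le j<k\}$; and (b) the corresponding sign equals $\beta_{l\oplus e_j,\,j}$ as defined in \eqref{beta}.

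I would carry this out by induction on $k$. For the base case $k=2$ one checks the two matrices $A_0^{(2)},A_1^{(2)}$ against the four entries $f_{l,3}$ directly. For the inductive step, fix $k\ge 3$ and read off the block structure in \eqref{coding matrix}: writing $l=(l_1,\dots,l_{k-1})$ with leading bit $l_1$, the matrix $A_j^{(k)}$ for $2\le j<k$ is block-diagonal with blocks $A_{j-1}^{(k-1)}$ and $-A_{j-1}^{(k-1)}$, so it acts on $l$ by leaving $l_1$ fixed, applying $A_{j-1}^{(k-1)}$ to $(l_2,\dots,l_{k-1})$, and multiplying by the extra sign $(-1)^{l_1}$. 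By the induction hypothesis $A_{j-1}^{(k-1)}$ realizes the flip of coordinate $j-1$ among $(l_2,\dots,l_{k-1})$ — which is coordinate $j$ in the length-$(k-1)$ index — with sign $\beta^{(k-1)}_{\cdot,j-1}=(-1)^{l_2+\cdots+l_j}$; combining with the extra $(-1)^{l_1}$ gives total sign $(-1)^{l_1+\cdots+l_j}=\beta_{l,j}$ (and $\beta_{l\oplus e_j,j}=\beta_{l,j}$ since flipping bit $j$ does not change $l_1+\cdots+l_j$ when... careful: it does change $l_j$, so one must note $\beta_{i,j}$ depends on the pair only through $i_1+\cdots+i_{j-1}$ together with a symmetry, which I verify). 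For $j=1$, the anti-diagonal block form of $A_1^{(k)}$ shows it flips the leading bit $l_1$ with sign $-1$ on the upper block and $+1$ on the lower block, matching $\beta_{l\oplus e_1,1}=(-1)^{l_1\oplus 1}$; and for $j=0$, $A_0^{(k)}=I$ trivially gives $f_{l,0}$ with coefficient $\beta_{l,0}=1$.

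The main obstacle I anticipate is bookkeeping the sign consistently across the two conventions: the Zigzag description indexes $\beta_{i,j}$ by the index $i$ of the summand $f_{i,j}$, whereas the matrix description naturally produces a sign attached to row $l$, and these are related by $i=l\oplus e_j$. The key reconciling observation is that $\beta_{i,j}=(-1)^{i_1+\cdots+i_j}$ and $\beta_{l,j}$ differ only in the $j$th bit, so they need not be equal; one must check that the sign the matrix actually outputs in row $l$ is $\beta_{l\oplus e_j,j}$ (the coefficient the code attaches to $f_{l\oplus e_j,j}$), not $\beta_{l,j}$. Tracking this through the recursion — where the coordinate that gets flipped is exactly the "new" coordinate $j$ introduced at level $k$ relative to level $k-1$ — is where the induction has to be set up with care; once the indexing is pinned down, each step is a routine block-matrix computation using Property \ref{Prop_Matrix}.
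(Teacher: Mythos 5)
Your overall architecture is the same as the paper's: use Property~\ref{Prop_Matrix}-(ii) to reduce the theorem to a single-entry identity, namely that the unique nonzero entry of row $l$ of $A_j^{(k)}$ sits in column $l\oplus e_j$ and carries the sign the code attaches to $f_{l\oplus e_j,j}$, and then prove this by induction on $k$ through the block recursion in \eqref{coding matrix}, with $j=0,1$ as direct base cases and the extra factor $(-1)^{l_1}$ coming from the $\pm A_{j-1}^{(k-1)}$ blocks. However, the sign bookkeeping in your inductive step is not just delicate but, as written, wrong. You invoke the induction hypothesis with sign $(-1)^{l_2+\cdots+l_j}$ (the \emph{row}-indexed quantity $\beta_{m,j-1}$ for $m=(l_2,\dots,l_{k-1})$) and conclude a total sign $(-1)^{l_1+\cdots+l_j}=\beta_{l,j}$; but the true entry is $\beta_{l\oplus e_j,j}$, and since flipping the $j$th bit toggles the parity of $i_1+\cdots+i_j$, one has $\beta_{l\oplus e_j,j}=(-1)^{l_1+\cdots+l_j+1}=-\beta_{l,j}$. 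The ``reconciling observation'' you hope for --- that $\beta_{i,j}$ depends only on $i_1+\cdots+i_{j-1}$ up to some symmetry --- does not exist; the two signs genuinely differ by $-1$, and you leave the resolution at ``which I verify.'' Since the whole content of this theorem is exactly this sign identity (the permutation part is immediate), this is a real gap, not a cosmetic one.

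The repair is straightforward and is what the paper does: formulate the induction hypothesis as the explicit \emph{column-indexed} entry formula
\begin{equation*}
A_j^{(k)}(l,\,l\oplus e_j)=\beta_{l\oplus e_j,\,j}=(-1)^{l_1+\cdots+l_j+1},\qquad 1\le j<k,\ 0\le l<N,
\end{equation*}
check it directly for $j=0$ (identity) and $j=1$ (anti-diagonal block), and then for $j\ge 2$ compute
\begin{equation*}
A_j^{(k+1)}(l,\,l\oplus e_j)=(-1)^{l_1}\,A_{j-1}^{(k)}\bigl((l_2,\dots,l_k),\,(l_2,\dots,l_k)\oplus e_{j-1}\bigr)
=(-1)^{l_1}\cdot(-1)^{l_2+\cdots+l_j+1}=(-1)^{l_1+\cdots+l_j+1},
\end{equation*}
so the same column-indexed form reproduces itself and no reconciliation between $\beta_{l,j}$ and $\beta_{l\oplus e_j,j}$ is ever needed. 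With the hypothesis pinned down this way, your sketch coincides with the paper's proof of \eqref{Eqn A and e}.
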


\begin{proof}
Let $A(l,i)$ denote the entry at row $l$  and column $i$ of matrix $A$. By Property 1-(ii), equations \eqref{Eqn P and P-1} and \eqref{beta}, it suffices to prove $A_j^{(k)}(l,P_j^{-1}(l))=\beta_{P_j^{-1}(l),j}$, i.e.,
\begin{equation}\label{Eqn A and e0}
  A_0^{(k)}(l,l)=A_0^{(k)}(l,l\oplus e_0)=\beta_{l,0}=1, 0\le l< N
\end{equation}
and
\begin{equation}\label{Eqn A and e}
  A_j^{(k)}(l,l\oplus e_j)=\beta_{l\oplus e_j,j}=(-1)^{l_1+\cdots+l_j+1},\ \  1\le j< k, \\ 0\le l< N.
\end{equation}

Obviously, \eqref{Eqn A and e0} holds since $A_0^{(k)}$ is the identity matrix  and \eqref{Eqn A and e} holds for $j=1$, i.e.,
$A_1^{(k)}(l,l\oplus e_1)=(-1)^{l_1+1}, 0\le l<N $, by the definition in \eqref{coding matrix}.

Hereafter, we prove \eqref{Eqn A and e} for $j\ge 2$ by the induction.
Suppose that \eqref{Eqn A and e} holds for $k\ge 2$ and $1\le j< k$. Then,
\begin{eqnarray*}
 &&A_j^{(k+1)}(l,l\oplus e_j)\\ &=& A_j^{(k+1)}((l_1,\cdots,l_{k}),(l_1,\cdots,l_{j-1},l_{j}\oplus 1,l_{j+1},\cdots,l_{k}))\\
   &=& (-1)^{l_1}A_{j-1}^{(k)}((l_2,\cdots,l_{k}),(l_2,\cdots,l_{j-1},l_{j}\oplus 1,l_{j+1},\cdots,l_{k}))  \\
   &=&(-1)^{l_1+\cdots+l_{j}+1}
\end{eqnarray*}
for $2\le j< k+1$ and $0\le l<2^{k}$, where the last two equalities respectively follow from  \eqref{coding matrix} and the assumption.

\end{proof}

\section{Bounds on disk I/O to optimally repair the parity nodes of the Zigzag code}

For a general $(k+2,k)$ MSR code over $\mathbf{F}_q$ defined in Table I,  Wang \emph{et al.} \cite{MDR} proved that the minimal disk I/O to repair the first and  second parity nodes are respectively at least $(k+1)N/2$ and $kN$ if $q=2$. In fact, the assertion can be proved for $q>2$ by almost the same proof in \cite{MDR}.

Specifically for the Zigzag code,
in this section we give a more tight bound on the minimal disk I/O for the  optimal repair of the  parity nodes.

Firstly, we state a connection between the optimal repair strategies for the two parity nodes of the Zigzag code.

\begin{Lemma}\label{repair relation parity 1 and 2}
If $S^{(k)}$ and $\tilde{S}^{(k)}$ are the repair matrices for the first parity node of the $(k+2,k)$ Zigzag code, then $\tilde{S}^{(k)} A_j^{(k)},0\le j< k$, and $S^{(k)}$  are the repair matrices for the second parity node, and vice versa.
\end{Lemma}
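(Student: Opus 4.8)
The plan is to exploit the symmetry between the two repair problems that is already visible in the displayed systems of equations in Section II, together with the algebraic identity $(A_j^{(k)})^2=-I_N$ from Property~\ref{Prop_Matrix}. Recall that the optimal repair of the first parity node requires the pair $(S^{(k)},\tilde S^{(k)})$ to satisfy \eqref{repair_1parity_node_requirement1}--\eqref{repair_1parity_node_requirement2}, namely $\mathrm{rank}\bigl(\begin{smallmatrix}S^{(k)}\\ \tilde S^{(k)}A_0^{(k)}\end{smallmatrix}\bigr)=N$ and $\mathrm{rank}\bigl(\begin{smallmatrix}S^{(k)}\\ \tilde S^{(k)}(A_0^{(k)}-A_l^{(k)})\end{smallmatrix}\bigr)=N/2$ for $1\le l<k$; whereas the optimal repair of the second parity node by the matrices $\tilde S^{(k)}A_j^{(k)}$ (for systematic node $j$) and $S^{(k)}$ (for node $k$) requires, after substituting $S_b\mapsto \tilde S^{(k)}$ (so that $S_bA_j=\tilde S^{(k)}A_j^{(k)}$) and $\tilde S_b\mapsto S^{(k)}$ into \eqref{repair_2parity_node_requirement1}--\eqref{repair_2parity_node_requirement2}, that $\mathrm{rank}\bigl(\begin{smallmatrix}\tilde S^{(k)}\\ S^{(k)}(A_0^{(k)})^{-1}\end{smallmatrix}\bigr)=N$ and $\mathrm{rank}\bigl(\begin{smallmatrix}\tilde S^{(k)}\\ S^{(k)}((A_0^{(k)})^{-1}-(A_l^{(k)})^{-1})\end{smallmatrix}\bigr)=N/2$ for $1\le l<k$. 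Since $A_0^{(k)}=I_N$, the first of these reads $\mathrm{rank}\bigl(\begin{smallmatrix}\tilde S^{(k)}\\ S^{(k)}\end{smallmatrix}\bigr)=N$, which is symmetric to the starting condition under swapping the two blocks and is unaffected by multiplying a block by an invertible matrix.

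The key step is to show that each rank condition in one problem is equivalent to the corresponding one in the other. First, multiplying a block row of a stacked matrix on the left by an invertible $N\times N$ matrix does not change its rank, and neither does permuting the two block rows; so $\mathrm{rank}\bigl(\begin{smallmatrix}S^{(k)}\\ \tilde S^{(k)}A_0^{(k)}\end{smallmatrix}\bigr)=N$ is equivalent to $\mathrm{rank}\bigl(\begin{smallmatrix}\tilde S^{(k)}\\ S^{(k)}\end{smallmatrix}\bigr)=N$ once we note $A_0^{(k)}=I_N$ is invertible (and more generally one may absorb $A_0^{(k)}$). For the interference conditions I would start from $A_0^{(k)}-A_l^{(k)}=I_N-A_l^{(k)}$ and use Property~\ref{Prop_Matrix}(i): from $(A_l^{(k)})^2=-I_N$ we get $(A_l^{(k)})^{-1}=-A_l^{(k)}$, hence $(A_0^{(k)})^{-1}-(A_l^{(k)})^{-1}=I_N+A_l^{(k)}$. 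Then I would write $I_N+A_l^{(k)}=-A_l^{(k)}\bigl(A_l^{(k)}-I_N\bigr)\cdot(-1)=A_l^{(k)}(I_N-A_l^{(k)})$ — more carefully, $A_l^{(k)}(I_N - A_l^{(k)}) = A_l^{(k)} - (A_l^{(k)})^2 = A_l^{(k)}+I_N$ — so that $(A_0^{(k)})^{-1}-(A_l^{(k)})^{-1}=A_l^{(k)}\bigl(A_0^{(k)}-A_l^{(k)}\bigr)$ with $A_l^{(k)}$ invertible. Therefore
\begin{equation*}
\mathrm{rank}\!\left(\!\begin{array}{c}\tilde S^{(k)}\\ S^{(k)}\bigl((A_0^{(k)})^{-1}-(A_l^{(k)})^{-1}\bigr)\end{array}\!\right)
=\mathrm{rank}\!\left(\!\begin{array}{c}\tilde S^{(k)}\\ S^{(k)}A_l^{(k)}\bigl(A_0^{(k)}-A_l^{(k)}\bigr)\end{array}\!\right).
\end{equation*}
This is not literally the same matrix as $\bigl(\begin{smallmatrix}S^{(k)}\\ \tilde S^{(k)}(A_0^{(k)}-A_l^{(k)})\end{smallmatrix}\bigr)$, so the comparison needs one more idea: I would instead multiply the bottom block of the latter on the right by the invertible matrix $A_l^{(k)}$ and use that right-multiplication by an invertible matrix of the \emph{whole} stacked matrix preserves rank only if applied to both blocks — so the honest route is to right-multiply everything by $A_l^{(k)}$, turning $\bigl(\begin{smallmatrix}S^{(k)}\\ \tilde S^{(k)}(A_0^{(k)}-A_l^{(k)})\end{smallmatrix}\bigr)A_l^{(k)} = \bigl(\begin{smallmatrix}S^{(k)}A_l^{(k)}\\ \tilde S^{(k)}(A_0^{(k)}-A_l^{(k)})A_l^{(k)}\end{smallmatrix}\bigr)$, and then observe $(A_0^{(k)}-A_l^{(k)})A_l^{(k)}=A_l^{(k)}-(A_l^{(k)})^2=A_l^{(k)}+I_N$, i.e. $\tilde S^{(k)}\bigl((A_l^{(k)})^{-1}-(A_0^{(k)})^{-1}\bigr)$ up to sign — matching, after a block swap and the sign absorption, the second-parity condition with the roles of $S$ and $\tilde S$ interchanged.

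Concretely, then, the proof is: (1) record $A_0^{(k)}=I_N$ and $(A_l^{(k)})^{-1}=-A_l^{(k)}$ from Property~\ref{Prop_Matrix}; (2) verify \eqref{repair_2parity_node_requirement1} for the pair $(\tilde S^{(k)}A_j^{(k)},\, S^{(k)})$ reduces, via left row-permutation, to \eqref{repair_1parity_node_requirement1}; (3) for $1\le l<k$, show $\bigl(\begin{smallmatrix}\tilde S^{(k)}\\ S^{(k)}((A_0^{(k)})^{-1}-(A_l^{(k)})^{-1})\end{smallmatrix}\bigr)$ has the same rank as $\bigl(\begin{smallmatrix}S^{(k)}\\ \tilde S^{(k)}(A_0^{(k)}-A_l^{(k)})\end{smallmatrix}\bigr)$ by right-multiplying the former's relevant derivation by the invertible $A_l^{(k)}$ and using $(A_l^{(k)})^2=-I_N$, so \eqref{repair_2parity_node_requirement2} $\Leftrightarrow$ \eqref{repair_1parity_node_requirement2}; (4) note every implication is reversible because every matrix we multiplied by is invertible, giving the ``vice versa''. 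The main obstacle I anticipate is purely bookkeeping: getting the left-versus-right multiplications and the block-row swap in the right order so that the two $2N\times N$ stacked matrices are genuinely related by invertible operations rather than just ``looking similar,'' and tracking the harmless sign factor $-1$ (invisible to rank). None of this requires the recursive structure \eqref{coding matrix} beyond Property~\ref{Prop_Matrix}, so the argument is short.
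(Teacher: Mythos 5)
You take the same route as the paper (reduce to the rank conditions, use Property~\ref{Prop_Matrix}(i) so that $(A_l^{(k)})^{-1}=-A_l^{(k)}$, and manipulate the stacked matrices by invertible right-multiplications and block swaps), and your treatment of the full-rank condition and of the ``vice versa'' direction is fine; but your step (3), the only nontrivial step, does not close. Write $M_1=\left(\begin{smallmatrix}S^{(k)}\\ \tilde S^{(k)}(A_0^{(k)}-A_l^{(k)})\end{smallmatrix}\right)$ (the hypothesis, of rank $N/2$) and $M_2=\left(\begin{smallmatrix}\tilde S^{(k)}\\ S^{(k)}\bigl((A_0^{(k)})^{-1}-(A_l^{(k)})^{-1}\bigr)\end{smallmatrix}\right)$ (the matrix in \eqref{repair_2parity_node_requirement2} for the pairing $S_b=\tilde S^{(k)}$, $\tilde S_b=S^{(k)}$, which is what you must show has rank $N/2$). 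Your manipulation right-multiplies $M_1$ by $A_l^{(k)}$ and stalls at $\left(\begin{smallmatrix}S^{(k)}A_l^{(k)}\\ \tilde S^{(k)}(I_N+A_l^{(k)})\end{smallmatrix}\right)$. This is \emph{not} $M_2$ up to a block swap and signs: the difference $(A_0^{(k)})^{-1}-(A_l^{(k)})^{-1}=I_N+A_l^{(k)}$ sits on the $\tilde S^{(k)}$ block instead of the $S^{(k)}$ block, and the leftover factor $A_l^{(k)}$ sticks to one block only, where it cannot be stripped (right-multiplying a single block of a stacked matrix is not rank-preserving). Moreover, ``the second-parity condition with the roles of $S$ and $\tilde S$ interchanged'' is the wrong target: the lemma needs exactly the pairing you set up at the start, and no interchanged condition is assumed or wanted. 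As written you have only shown $\mathrm{rank}(M_1)=\mathrm{rank}(M_1A_l^{(k)})$; the equality $\mathrm{rank}(M_1)=\mathrm{rank}(M_2)$ is asserted, not derived.

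The missing idea is the identity $(I_N+A_l^{(k)})(I_N-A_l^{(k)})=I_N-(A_l^{(k)})^2=2I_N=-I_N$ over $\mathbf{F}_3$, i.e.\ that $I_N\pm A_l^{(k)}$ are invertible (invertibility of $I_N-A_l^{(k)}=A_0^{(k)}-A_l^{(k)}$ is the MDS condition \eqref{Eqn_MDS}) and that multiplying by one of them collapses the other to a nonzero scalar. Concretely, right-multiply $M_1$ by $I_N+A_l^{(k)}$ instead of by $A_l^{(k)}$: since $A_0^{(k)}-A_l^{(k)}=I_N-A_l^{(k)}$, the top block becomes $S^{(k)}(I_N+A_l^{(k)})=S^{(k)}\bigl((A_0^{(k)})^{-1}-(A_l^{(k)})^{-1}\bigr)$ and the bottom block becomes $\tilde S^{(k)}(I_N-A_l^{(k)})(I_N+A_l^{(k)})=-\tilde S^{(k)}$; a block swap and sign absorption then give exactly $M_2$, so $\mathrm{rank}(M_2)=\mathrm{rank}(M_1)=N/2$, and the invertibility of the factor makes the step reversible, which yields the ``vice versa''. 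This is precisely the paper's chain \eqref{Eqn_Connection}, run in the opposite direction (there $M_2$ is multiplied by $I_N-A_l^{(k)}$ so that its bottom block collapses to $-S^{(k)}$). With this single correction your outline becomes a complete proof.
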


\begin{proof} Note from \eqref{Eqn_MDS} and \eqref{coding matrix} that $A_{0}^{(k)}-A_{l}^{(k)}=I_{N}-A_{l}^{(k)}$ is nonsingular
for $1\le l< k$. Then,
\begin{eqnarray}\label{Eqn_Connection}
\textrm{rank} \left(\left(
\begin{array}{c}
\tilde{S}^{(k)} \\
S^{(k)} \left((A_0^{(k)})^{-1}-(A_l^{(k)})^{-1}\right)
\end{array}
\right)\right) &=&\textrm{rank}\left(\left(
\begin{array}{c}
\tilde{S}^{(k)} \\
S^{(k)} (I_{N}+A_l^{(k)})
\end{array}
\right)\right)\nonumber\\
&=&\textrm{rank}\left(\left(
\begin{array}{c}
\tilde{S}^{(k)} \\
S^{(k)} (I_{N}+A_l^{(k)})
\end{array}
\right)(I_{N}-A_l^{(k)})\right)\nonumber\\
&=&\textrm{rank}\left(\left(
\begin{array}{c}
\tilde{S}^{(k)}(I_{N}-A_l^{(k)}) \\
S^{(k)} (I_{N}+A_l^{(k)})(I_{N}-A_l^{(k)})
\end{array}
\right)\right)\nonumber\\
&=&\textrm{rank}\left(\left(
\begin{array}{c}
S^{(k)}\\
\tilde{S}^{(k)}(I_{N}-A_l^{(k)})
\end{array}
\right)\right)\nonumber\\
&=&\textrm{rank}\left(\left(
\begin{array}{c}
S^{(k)}\\
\tilde{S}^{(k)}(A_0^{(k)}-A_l^{(k)})
\end{array}
\right)\right)
\end{eqnarray}
where in the first and fourth identities we use Property 1-(i), i.e., $(A_{l}^{(k)})^2=-I_{N}$ and then $(A_{l}^{(k)})^{-1}=-A_{l}^{(k)}$.

In addition,
\begin{eqnarray*}
\textrm{rank} \left(\left(
\begin{array}{c}
\tilde{S}^{(k)} \\
S^{(k)}(A_0^{(k)})^{-1}
\end{array}
\right)\right) = \textrm{rank} \left(\left(
\begin{array}{c}
\tilde{S}^{(k)} \\
S^{(k)}
\end{array}
\right)\right) =\textrm{rank} \left(\left(
\begin{array}{c}
S^{(k)} \\
\tilde{S}^{(k)}A_0^{(k)}
\end{array}
\right)\right).
\end{eqnarray*}

Therefore, the result can be obtained from \eqref{repair_1parity_node_requirement1}, \eqref{repair_1parity_node_requirement2}, \eqref{repair_2parity_node_requirement1} and \eqref{repair_2parity_node_requirement2}.
\end{proof}


\begin{Theorem}\label{I/O bound}
The disk I/O to optimally repair the first or second parity node  of the $(k+2,k)$ Zigzag code is at least $kN+{k-3\over 2(k-1)}N$.
\end{Theorem}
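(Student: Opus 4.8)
The plan is to fix attention on the first parity node: by Lemma~\ref{repair relation parity 1 and 2} the second parity node is governed by the same repair matrices (up to the signed permutations $A_j^{(k)}$) and is handled in exactly the same way, so it suffices to lower bound the repair disk I/O of the first parity node, which equals $kN_1+N_2$, where $N_1,N_2$ are the numbers of nonzero columns of the two repair matrices $S_a,\tilde S_a$ subject to \eqref{repair_1parity_node_requirement1}--\eqref{repair_1parity_node_requirement2} with $A_0^{(k)}=I_N$. Thus the theorem is a statement about the feasible region of $(N_1,N_2)$.

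First I would convert the rank conditions into combinatorial structure. Because every $I_N-A_l^{(k)}$ is nonsingular (a direct sum of invertible $2\times2$ blocks, as in the proof of Lemma~\ref{repair relation parity 1 and 2}) and $S_a$ has rank $N/2$, condition \eqref{repair_1parity_node_requirement2} forces the row space of $\tilde S_a(I_N-A_l^{(k)})$ to \emph{equal} the row space $U$ of $S_a$ for every $1\le l<k$, while \eqref{repair_1parity_node_requirement1} gives $U\oplus V=\mathbf{F}_3^N$, where $V$ is the row space of $\tilde S_a$. Let $\sigma,\tilde\sigma$ be the column supports of $S_a,\tilde S_a$ (so $|\sigma|=N_1$, $|\tilde\sigma|=N_2$), put $\bar\sigma=[0,N-1]\setminus\sigma$, and let $\tilde s_c$ denote column $c$ of $\tilde S_a$. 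Since $U$ lives in the coordinates $\sigma$, every column of $\tilde S_a(I_N-A_l^{(k)})$ indexed by $\bar\sigma$ vanishes; using Property~\ref{Prop_Matrix}-(ii) this reads $\tilde s_c=\beta_{c,l}\,\tilde s_{c\oplus e_l}$ for all $c\in\bar\sigma$ and all $1\le l<k$. From this together with $U\oplus V=\mathbf{F}_3^N$ I would deduce: (i) $\bar\sigma$ is an independent set of the $(k-1)$-dimensional hypercube (two adjacent elements of $\bar\sigma$ would give $\tilde s_c=-\tilde s_c$, impossible over $\mathbf{F}_3$), hence $N_1=N-|\bar\sigma|\ge N/2$; (ii) $\tilde\sigma$ contains $\bar\sigma$ together with its cube-neighbourhood $\partial\bar\sigma$, all the corresponding columns being nonzero; (iii) for $c\in\bar\sigma$ the columns $\tilde s_{c\oplus e_l}$ are scalar multiples of $\tilde s_c$, and elements of $\bar\sigma$ at Hamming distance $2$ have proportional columns, so the $|\bar\sigma|+|\partial\bar\sigma|$ columns indexed by $\bar\sigma\cup\partial\bar\sigma$ span a subspace of dimension at most the number $t$ of proportionality classes, with $t\le|\bar\sigma|$ and $t\le|\partial\bar\sigma|/(k-1)$.

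Next comes a counting step. Since $\tilde S_a$ has column rank $N/2$, at least $N/2-t$ of its columns outside $\bar\sigma\cup\partial\bar\sigma$ must be nonzero, so $N_2\ge|\bar\sigma|+|\partial\bar\sigma|+N/2-t$. Feeding in $N_1=N-|\bar\sigma|$, the trivial bounds $N_1,N_2\le N$, the estimates on $t$, and the vertex-isoperimetric inequality of the hypercube (which prevents $\bar\sigma$ from being large while $\partial\bar\sigma$ stays small) reduces the problem to a small optimization over the parameters $|\bar\sigma|$ and $|\partial\bar\sigma|$; carrying it out yields $kN_1+N_2\ge kN+\frac{k-3}{2(k-1)}N$.

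The step I expect to be the hardest is this last optimization. The crude estimates $|\partial\bar\sigma|\ge|\bar\sigma|$ and $t\le|\bar\sigma|$ already settle the regime where $\bar\sigma$ is spread out (many small proportionality classes), but when $\bar\sigma$ is clustered one genuinely needs the sharper isoperimetric bound on $|\partial\bar\sigma|$ combined with $t\le|\partial\bar\sigma|/(k-1)$ to rule out feasible points below $kN+\frac{k-3}{2(k-1)}N$ and to pin the constant exactly. One should also note that the bound is of real interest only for $k\ge4$ (the cases $k\le3$ being checked directly), and the second parity node is then disposed of via Lemma~\ref{repair relation parity 1 and 2}.
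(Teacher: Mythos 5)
Your setup (the reduction to the first parity node via Lemma~\ref{repair relation parity 1 and 2}, and the column relations $\tilde{s}_c=\pm\,\tilde{s}_{c\oplus e_l}$ for every zero column $c\in\bar\sigma$ of $S_a$ and every $1\le l<k$) coincides with the paper's, but from \eqref{repair_1parity_node_requirement1} you extract only that the columns of $\tilde{S}_a$ indexed by $\bar\sigma$ are nonzero, whereas the whole proof hinges on the stronger fact that they are \emph{linearly independent}: the stacked matrix $\bigl(\begin{smallmatrix}S_a\\ \tilde{S}_a\end{smallmatrix}\bigr)$ is nonsingular and those columns have zero $S_a$-part. That independence kills your ``proportionality classes'': two elements of $\bar\sigma$ at Hamming distance $1$ or $2$ would force two of these independent columns to be proportional, so the elements of $\bar\sigma$ have pairwise disjoint neighbourhoods, the $(k-1)|\bar\sigma|$ neighbour columns are each a multiple of some $\bar\sigma$-column, and $\mathrm{rank}(\tilde{S}_a)=N/2\le N-(k-1)|\bar\sigma|$ gives $N_1\ge N-\frac{N}{2(k-1)}$ immediately; $N_2\ge N-\frac{N}{2(k-1)}$ follows by the symmetric argument through \eqref{Eqn_Connection}, and no isoperimetric inequality is needed anywhere. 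This is exactly the paper's route.

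Without that ingredient your final optimization genuinely fails: the constraints you list do not imply the theorem. Take $k=6$, $N=32$, and let $\bar\sigma$ be the four even-weight vertices of a $3$-dimensional subcube of the $5$-cube: it is an independent set (so the isoperimetric inequality cannot exclude it), $|\bar\sigma|=4$, $|\partial\bar\sigma|=12$, and nothing you state prevents all columns indexed by $\bar\sigma\cup\partial\bar\sigma$ from lying in a single proportionality class, $t=1\le\min\{|\bar\sigma|,\,|\partial\bar\sigma|/(k-1)\}$. Your inequalities then certify only $kN_1+N_2\ge 6(32-4)+(4+12+16-1)=199$, while $kN+\frac{k-3}{2(k-1)}N=201.6$, and the trivial bounds $N_1,N_2\le N$ are respected; so the ``carrying it out'' step cannot be done from the listed facts. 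The missing linear-independence argument is what removes such configurations (they would require four pairwise proportional yet independent columns). Incidentally, once that fact is inserted, your direct count of the nonzero columns of $\tilde{S}_a$ gives $N_2\ge (k-1)|\bar\sigma|+\frac{N}{2}$ and hence $kN_1+N_2\ge kN+\frac{N}{2}-\frac{N}{2(k-1)}$, slightly sharper than the paper's bound (the paper instead bounds $N_2$ by the swapped argument); but as written the proposal does not establish the theorem.
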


\begin{proof} Suppose that $S_a^{(k)}$ and $\tilde{S}_a^{(k)}$  are two repair matrices for the first parity node of $(k+2,k)$ Zigzag code.
According to the definition of repair disk I/O,  we need to prove $kN_1+N_2\ge kN+{k-3\over 2(k-1)}N$, where $N_1$ and $N_2$ respectively denote the number of nonzero columns of
the matrices $S_a^{(k)}$ and $\tilde{S}_a^{(k)}$.

By \eqref{repair_1parity_node_requirement1} and \eqref{repair_1parity_node_requirement2}, we have
\begin{eqnarray}\label{repair_1parity_node_requirement1 zigzag}
\textrm{rank} \left(\left(
\begin{array}{c}
S_a^{(k)} \\
\tilde{S}_a^{(k)}A_0^{(k)}
\end{array}
\right)\right) =\textrm{rank} \left(\left(
\begin{array}{c}
S_a^{(k)} \\
\tilde{S}_a^{(k)}
\end{array}
\right)\right) =N
\end{eqnarray}
and
\begin{eqnarray}\label{repair_1parity_node_requirement2 zigzag}
\textrm{rank} \left(\left(
\begin{array}{c}
S_a^{(k)} \\
\tilde{S}_a^{(k)} (A_0^{(k)}-A_l^{(k)})
\end{array}
\right)\right)  =\textrm{rank} \left(\left(
\begin{array}{c}
S_a^{(k)} \\
\tilde{S}_a^{(k)} (I_N-A_l^{(k)})
\end{array}
\right)\right)  = {N\over 2},\ \ 1\le l< k.
\end{eqnarray}

For $0\le i<N$, denote by $S_a^{(k)}[i]$ and $\tilde{S}_a^{(k)}[i]$ the column $i$ of $S_a^{(k)}$ and $\tilde{S}_a^{(k)}$.
Assume that columns $i_1,i_2,\cdots,i_{N-N_1}$ of $S_a^{(k)}$ are zero columns.
Note that in \eqref{repair_1parity_node_requirement2 zigzag}, $\mathrm{rank}(S_a^{(k)})=\mathrm{rank}(\tilde{S}_a^{(k)} (I_N-A_l^{(k)}))=N/2$.
Then, we have that $\tilde{S}_a^{(k)}(I_N-A_l^{(k)})[i_s]=\tilde{S}_a^{(k)}[i_s]-(\tilde{S}_a^{(k)}A_l^{(k)})[i_s]$
is also a zero column, i.e.,
\begin{equation*}
(\tilde{S}_a^{(k)}A_l^{(k)})[i_s]=\tilde{S}_a^{(k)}[i_s] ~\mathrm{for}~ 1\le l< k ~\mathrm{and}~1\le s\le N-N_1.
\end{equation*}
Further, it follows from Property 1-(ii) and \eqref{Eqn A and e} that only the $(i\oplus e_l)$th entry in  $A_l^{(k)}[i]$ is  $\pm1$, which implies   $(\tilde{S}_a^{(k)}A_l^{(k)})[i_s]=\pm\tilde{S}_a^{(k)}[i_s\oplus e_l]$. Thus,
\begin{equation}\label{i columns equal}
 \tilde{S}_a^{(k)}[i_s\oplus e_l]=\pm \tilde{S}_a^{(k)}[i_s]\ \ \mbox{for\ \ } 1\le l< k ~\mathrm{and}~1\le s\le N-N_1.
\end{equation}

On the other hand, it is seen from \eqref{repair_1parity_node_requirement1 zigzag} that all the columns $i_1,i_2,\cdots,i_{N-N_1}$ of $\tilde{S}_a^{(k)}$ are
linearly independent, which indicates that \begin{equation}\label{columns disjoint}
  \{i_u\oplus e_l:1\le l< k\}\cap \{i_v\oplus e_l:1\le l< k\}=\emptyset \mbox{\ \ for\ \ }1\le u\ne v\le N-N_1.
\end{equation}

Therefore, applying  \eqref{i columns equal} and \eqref{columns disjoint} to $\textrm{rank}(\tilde{S}_a^{(k)})=N/2$,
we obtain $N/2\le N-(k-1)(N-N_1)$, i.e., $N_1\ge N-{N\over {2(k-1)}}$.
By means of \eqref{Eqn_Connection}, we can prove $N_2\ge N-{N\over {2(k-1)}}$ in the same fashion. Hence,
\begin{equation*}
  kN_1+N_2\ge (k+1)(N-{N\over {2(k-1)}})=kN+N-{N(k+1)\over {2(k-1)}}=kN+{k-3\over 2(k-1)}N.
\end{equation*}
That is, the assertion is valid for the first parity node.

For the second parity node of the $(k+2,k)$ Zigzag code, assume that $S_b^{(k)}A_j^{(k)},0\le j< k$, and $\tilde{S}_b^{(k)}$  are the repair matrices.
According to the definition, the repair disk I/O is the total number of  nonzero columns of the matrices
$S_b^{(k)}A_j^{(k)}$ and $\tilde{S}_b^{(k)}, 0\le j< k$, which is $kN_1+N_2$ by Property \ref{Prop_Matrix}-(ii), where $N_1$ and $N_2$ respectively denote the number of nonzero columns of the matrices $S_b^{(k)}$ and $\tilde{S}_b^{(k)}$. By Lemma \ref{repair relation parity 1 and 2}, it is known that
$\tilde{S}_b^{(k)}$ and $S_b^{(k)}$ are two repair matrices for the first parity node. Therefore, by the analysis for the
first parity node we have
$N_1\ge N-{N\over {2(k-1)}}$ and $N_2\ge N-{N\over {2(k-1)}}$, i.e., $kN_1+N_2\ge kN+{k-3\over 2(k-1)}N$.
\end{proof}

\section{Repair matrices for the parity nodes of the Zigzag code}

In this section, we give the repair matrices for the parity nodes of the $(k+2,k)$ Zigzag code and verify that they satisfy \eqref{repair_1parity_node_requirement1}, \eqref{repair_1parity_node_requirement2}, \eqref{repair_2parity_node_requirement1} and \eqref{repair_2parity_node_requirement2}.

Recursively define the $2^{k-2}\times2^{k-1}$  matrices $E^{(k)}$ and $F^{(k)}$  over  $\mathbf{F}_3$  as
\begin{equation}\label{Eqn EF}
  E^{(k)}=\left(
              \begin{array}{cc}
                E^{(k-1)} &  \\
                 & F^{(k-1)} \\
              \end{array}
            \right),
  \ \  F^{(k)}=\left(
              \begin{array}{cc}
                F^{(k-1)} &  \\
                 & E^{(k-1)} \\
              \end{array}
            \right), \ \ k\ge3
\end{equation}
where
\begin{equation}\label{EF initial 1}
  E^{(2)}=\left(
               \begin{array}{cc}
                 0 & -1\\
               \end{array}
             \right),\ \  F^{(2)}=\left(
               \begin{array}{cc}
                 -1 & 0 \\
               \end{array}
             \right).\end{equation}
Next recursively define the  $2^{k-2}\times2^{k-1}$  matrices $S^{(k)}_a$ and $\tilde{S}^{(k)}_a$  over  $\mathbf{F}_3$  as
\begin{equation}\label{Eqn S}
  S^{(k)}_a=\left(
              \begin{array}{cc}
                S^{(k-1)}_a & E^{(k-1)} \\
                 & \tilde{S}^{(k-1)}_a \\
              \end{array}
            \right)
  ,\ \ \tilde{S}^{(k)}_a=\left(
                           \begin{array}{cc}
                             \tilde{S}^{(k-1)}_a & -F^{(k-1)}\\
                              &  S^{(k-1)}_a \\
                           \end{array}
                         \right), \ \ k\ge3
  \end{equation}
where  \begin{equation}\label{S initial 1} \ \  S^{(2)}_a=\left(
               \begin{array}{cc}
                 0 & 1 \\
               \end{array}
             \right),\ \ \tilde{S}^{(2)}_a=\left(
               \begin{array}{cc}
                 1 & 1 \\
               \end{array}
             \right).
\end{equation}

\begin{Proposition}\label{full rank}
 For   $k\ge2$,
$\mathrm{rank}\left(\left(
                 \begin{array}{c}
                   S_a^{(k)} \\
                   \tilde{S}_a^{(k)}A_0^{(k)}
                 \end{array}
               \right)\right)=N$.
\end{Proposition}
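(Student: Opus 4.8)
The plan is to prove the equivalent but cleaner statement that the square matrix $\left(\begin{smallmatrix}S_a^{(k)}\\ \tilde S_a^{(k)}\end{smallmatrix}\right)$ of order $2^{k-1}$ is nonsingular; this is the same as the assertion because $A_0^{(k)}=I_N$ by \eqref{coding matrix}, so $\tilde S_a^{(k)}A_0^{(k)}=\tilde S_a^{(k)}$, and the stacked matrix has $2^{k-1}=N$ rows and $N$ columns. The argument is an induction on $k$ driven by the recursions \eqref{Eqn S}. For the base case $k=2$, \eqref{S initial 1} gives $\left(\begin{smallmatrix}S_a^{(2)}\\ \tilde S_a^{(2)}\end{smallmatrix}\right)=\left(\begin{smallmatrix}0&1\\ 1&1\end{smallmatrix}\right)$, which is nonsingular.

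For the inductive step, I would substitute the recursions \eqref{Eqn S} to display $\left(\begin{smallmatrix}S_a^{(k)}\\ \tilde S_a^{(k)}\end{smallmatrix}\right)$ as the block matrix
\begin{equation*}
M=\left(\begin{array}{cc} S_a^{(k-1)} & E^{(k-1)}\\ 0 & \tilde S_a^{(k-1)}\\ \tilde S_a^{(k-1)} & -F^{(k-1)}\\ 0 & S_a^{(k-1)}\end{array}\right),
\end{equation*}
regarded with its columns split into a left and a right block of $2^{k-2}$ columns each. The key observation is that the second and fourth row strips vanish on the left block, while on the right block they stack to $\left(\begin{smallmatrix}\tilde S_a^{(k-1)}\\ S_a^{(k-1)}\end{smallmatrix}\right)$, which is nonsingular by the induction hypothesis (it is the matrix of the previous stage, up to a permutation of its rows, hence of the same rank). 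Therefore the rows of $E^{(k-1)}$ and of $-F^{(k-1)}$ all lie in the span of the rows of those two strips, so adding suitable linear combinations of the second and fourth strips to the first and third strips annihilates the right blocks of the latter; since the strips being added are zero on the left block, the left blocks $S_a^{(k-1)}$ and $\tilde S_a^{(k-1)}$ of the first and third strips are left intact. This brings $M$ by elementary row operations to the form $\left(\begin{smallmatrix} S_a^{(k-1)} & 0\\ 0 & \tilde S_a^{(k-1)}\\ \tilde S_a^{(k-1)} & 0\\ 0 & S_a^{(k-1)}\end{smallmatrix}\right)$, whose row space splits as a direct sum along the column partition, so its rank equals $\mathrm{rank}\left(\begin{smallmatrix}S_a^{(k-1)}\\ \tilde S_a^{(k-1)}\end{smallmatrix}\right)+\mathrm{rank}\left(\begin{smallmatrix}\tilde S_a^{(k-1)}\\ S_a^{(k-1)}\end{smallmatrix}\right)=2^{k-2}+2^{k-2}=2^{k-1}=N$, once more by the induction hypothesis.

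I do not expect a genuine obstacle here; the only point requiring care is the block bookkeeping, and in particular the observation that the row operations clearing $E^{(k-1)}$ and $-F^{(k-1)}$ do not disturb the left blocks, precisely because the strips being added vanish on the left columns. In fact the argument uses nothing about $E^{(k)},F^{(k)}$ beyond their dimensions. It is also worth recording that the induction hypothesis is invoked twice, once for $\left(\begin{smallmatrix}S_a^{(k-1)}\\ \tilde S_a^{(k-1)}\end{smallmatrix}\right)$ and once for its row permutation $\left(\begin{smallmatrix}\tilde S_a^{(k-1)}\\ S_a^{(k-1)}\end{smallmatrix}\right)$.
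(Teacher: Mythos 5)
Your proof is correct and takes essentially the same route as the paper: an induction on the recursive definition \eqref{Eqn S}, rearranging the four row strips so that the stacked matrix becomes block triangular with the two diagonal blocks being the induction-hypothesis matrix $\left(\begin{smallmatrix}S_a^{(k-1)}\\ \tilde S_a^{(k-1)}\end{smallmatrix}\right)$ and its row permutation. Your explicit row operations clearing $E^{(k-1)}$ and $-F^{(k-1)}$ are just a more detailed justification of the rank count that the paper reads off directly from that block-triangular form.
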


\begin{proof}
When $k=2$, the statement is easily checked.
For any given $k\ge 2$, suppose that the statement is true.
According to recursive definition in \eqref{Eqn S},  we have
\begin{eqnarray*}
   \mathrm{rank}\left(\left(
                 \begin{array}{c}
                   S_a^{(k+1)} \\
                   \tilde{S}_a^{(k+1)}A_0^{(k+1)}
                 \end{array}
               \right)\right)&=& \mathrm{rank}\left(\left(
                 \begin{array}{c}
                   S_a^{(k+1)} \\
                   \tilde{S}_a^{(k+1)}
                 \end{array}
               \right)\right)\\&=&\mathrm{rank}\left(\left(
                 \begin{array}{cc}
                   S_a^{(k)} & E^{(k)}\\
                   & \tilde{S}_a^{(k)} \\
                   \tilde{S}_a^{(k)} & -F^{(k)}\\
                   & S_a^{(k)}
                 \end{array}
               \right)\right)\\&=& \mathrm{rank}\left(
\left(
                 \begin{array}{cc}
                   S_a^{(k)} & E^{(k)}\\
                   \tilde{S}_a^{(k)} & -F^{(k)}\\
                    & \tilde{S}_a^{(k)} \\
                   & S_a^{(k)}
                 \end{array}
               \right)\right)\\
               &=&2N
\end{eqnarray*}
since $\left(
                 \begin{array}{c}
                   S_a^{(k)} \\
                   \tilde{S}_a^{(k)}
                 \end{array}
               \right)=\left(
                 \begin{array}{c}
                   S_a^{(k)} \\
                   \tilde{S}_a^{(k)}A_0^{(k)}
                 \end{array}
               \right)$ is an $N\times N$ matrix of full rank.

Thus, the proof is finished by the above induction.
\end{proof}

\begin{Proposition}\label{rank A1-A0}
For $k\ge 2$,
$\mathrm{rank}\left(\left(
                 \begin{array}{c}
                   S_a^{(k)} \\
                   \tilde{S}_a^{(k)}(A_0^{(k)}-A_{1}^{(k)}) \\
                 \end{array}
               \right)\right)=N/2$.
\end{Proposition}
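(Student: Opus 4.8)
The plan is to prove the statement by induction on $k$, reading off the block structure from the recursions \eqref{Eqn EF}, \eqref{Eqn S} and from the shape of $A_0^{(k)}-A_1^{(k)}=I_N-A_1^{(k)}$ given by \eqref{coding matrix}. The base case $k=2$ is a direct computation. For the step from $k$ to $k+1$, I first multiply $\tilde S_a^{(k+1)}$ by $A_0^{(k+1)}-A_1^{(k+1)}$ in $2\times2$ block form; a short calculation shows the product has block rows $\bigl(\,\tilde S_a^{(k)}+F^{(k)}\ \ \ \tilde S_a^{(k)}-F^{(k)}\,\bigr)$ and $\bigl(\,-S_a^{(k)}\ \ \ S_a^{(k)}\,\bigr)$, while $S_a^{(k+1)}$ has block rows $\bigl(\,S_a^{(k)}\ \ \ E^{(k)}\,\bigr)$ and $\bigl(\,0\ \ \ \tilde S_a^{(k)}\,\bigr)$. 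Stacking these, the matrix $M$ in question, of order $2^{k}$, is a four-block-row matrix built from the four families $S_a^{(k)},\tilde S_a^{(k)},E^{(k)},F^{(k)}$, and I must show $\mathrm{rank}(M)=2^{k-1}$.

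The engine of the proof is a pair of auxiliary row-space inclusions, which I would establish first, by a simultaneous induction on $k$: (L1) every row of $S_a^{(k)}+E^{(k)}$ lies in the row space of $\tilde S_a^{(k)}$; (L2) every row of $\tilde S_a^{(k)}+F^{(k)}$ lies in the row space of $S_a^{(k)}$. Both hold at $k=2$ since $S_a^{(2)}+E^{(2)}=0$ and $\tilde S_a^{(2)}+F^{(2)}=S_a^{(2)}$. For the inductive step one writes $S_a^{(k)}+E^{(k)}$ and $\tilde S_a^{(k)}+F^{(k)}$ in $2\times2$ block form via \eqref{Eqn EF}, \eqref{Eqn S}; the lower block rows follow directly from the induction hypothesis, whereas the upper block rows carry a stray copy of $E^{(k-1)}$ (respectively $F^{(k-1)}$) which one removes by writing $z^{T}E^{(k-1)}=z^{T}(S_a^{(k-1)}+E^{(k-1)})-z^{T}S_a^{(k-1)}$ (respectively the analogous identity for $F^{(k-1)}$) and then applying the companion hypothesis, so that the cross term collapses into the desired row space.

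Granting (L1) and (L2), the remainder is bookkeeping. By Proposition \ref{full rank} the $N\times N$ matrix formed by stacking $S_a^{(k)}$ over $\tilde S_a^{(k)}$ is nonsingular, so $S_a^{(k)}$ and $\tilde S_a^{(k)}$ each have full row rank $N/2$. Using (L2), write $\tilde S_a^{(k)}+F^{(k)}=C S_a^{(k)}$ for a square matrix $C$ over $\mathbf F_3$. Adding the first block row of $M$ to the fourth and subtracting $C$ times the first block row from the third clears the lower-left blocks and brings $M$ to block-upper-triangular form with top-left block $S_a^{(k)}$ and bottom-right block $D$ whose block rows are $\tilde S_a^{(k)}$, $\tilde S_a^{(k)}-F^{(k)}-CE^{(k)}$, and $S_a^{(k)}+E^{(k)}$. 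Since $S_a^{(k)}$ has full row rank, $\mathrm{rank}(M)=\mathrm{rank}(S_a^{(k)})+\mathrm{rank}(D)$. The block row $S_a^{(k)}+E^{(k)}$ of $D$ lies in the row space of $\tilde S_a^{(k)}$ by (L1); and since $F^{(k)}=C S_a^{(k)}-\tilde S_a^{(k)}$ gives $F^{(k)}+CE^{(k)}=C(S_a^{(k)}+E^{(k)})-\tilde S_a^{(k)}$, whose rows are combinations of rows already lying in the row space of $\tilde S_a^{(k)}$, the block row $\tilde S_a^{(k)}-F^{(k)}-CE^{(k)}$ does too. Hence $\mathrm{rank}(D)=\mathrm{rank}(\tilde S_a^{(k)})$ and $\mathrm{rank}(M)=\mathrm{rank}(S_a^{(k)})+\mathrm{rank}(\tilde S_a^{(k)})$, which is exactly half the order of $M$, completing the induction.

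The step I expect to be the real obstacle is the simultaneous induction for (L1) and (L2): the four matrix families are genuinely interlocked — the upper block rows of $S_a^{(k)}+E^{(k)}$ cannot be absorbed into the row space of $\tilde S_a^{(k)}$ using (L1) at level $k-1$ alone, they require (L2) at level $k-1$ as well, and symmetrically — so the two inclusions must be proved together, with careful tracking of the combination vectors through the block recursion. The block multiplications, the row reductions, and the final rank count via Proposition \ref{full rank} are all routine by comparison.
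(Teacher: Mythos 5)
Your proof is correct and follows essentially the same route as the paper's: the same block computation of $\tilde S_a^{(k+1)}(A_0^{(k+1)}-A_1^{(k+1)})$, followed by a reduction to the same two auxiliary facts, since your inclusions (L1) and (L2) are, given that $S_a^{(k)}$ and $\tilde S_a^{(k)}$ have full row rank by Proposition \ref{full rank}, exactly equivalent to the paper's claims that stacking $S_a^{(k)}+E^{(k)}$ over $\tilde S_a^{(k)}$, or $\tilde S_a^{(k)}+F^{(k)}$ over $S_a^{(k)}$, yields rank $N/2$, and both are established by the same interlocked induction along the recursion \eqref{Eqn S}. The only difference is executional: you clear blocks by row operations with an explicit coefficient matrix $C$ and count ranks using the full row rank of $S_a^{(k)}$, whereas the paper multiplies by fixed matrices $P$ and $Q$ to reach a block-diagonal form and adds the two ranks.
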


\begin{proof}
When $k=2$, the statement is easily checked. When $k>2$,
by the recursive definitions in \eqref{coding matrix} and \eqref{Eqn S}, we have
\begin{eqnarray*}
&&\tilde{S}_a^{(k)}(A_0^{(k)}-A_{1}^{(k)})\\&=& \tilde{S}_a^{(k)}(I_N-A_{1}^{(k)})\\
 &=&\left(
                                               \begin{array}{cc}
                                                 \tilde{S}_a^{(k-1)} & -F^{(k-1)} \\
                                                  & S_a^{(k-1)} \\
                                               \end{array}
                                             \right)\left(
                                             \left(\begin{array}{cc}
                                                       I_{N/2} &\\
                                                       &  I_{N/2}
                                                       \end{array}
                                                     \right)-\left(
                                                       \begin{array}{cc}
                                                           & -I_{N/2}\\
                                                         I_{N/2} &
                                                       \end{array}
                                                     \right)\right)\\&=&\left(
                                                                 \begin{array}{cc}
                                                                  \tilde{S}_a^{(k-1)}+F^{(k-1)} & \tilde{S}_a^{(k-1)}-F^{(k-1)} \\
                                                                   -S_a^{(k-1)} & S_a^{(k-1)} \\
                                                                 \end{array}
                                                               \right).
\end{eqnarray*}
Therefore,
\begin{eqnarray}\label{Eqn_Matrx_A1}
&&  \mathrm{rank}\left(\left(
                 \begin{array}{c}
                   S_a^{(k)} \\
                  \tilde{S}_a^{(k)}(A_0^{(k)}-A_{1}^{(k)})
                 \end{array}
               \right)\right)\nonumber\\ &=& \mathrm{rank}\left(\left(
                                                                 \begin{array}{cc}
                                                                 S_a^{(k-1)} & E^{(k-1)}\\
                                                                  & \tilde{S}_a^{(k-1)}\\
                                                                  \tilde{S}_a^{(k-1)}+F^{(k-1)} & \tilde{S}_a^{(k-1)}-F^{(k-1)} \\
                                                                   -S_a^{(k-1)} & S_a^{(k-1)} \\
                                                                 \end{array}
                                                               \right)\right)\nonumber\\
   &=& \mathrm{rank}\left( P\cdot
   \left(
                                                                 \begin{array}{cc}
                                                                 S_a^{(k-1)} & E^{(k-1)}\\
                                                                  & \tilde{S}_a^{(k-1)}\\
                                                                 \tilde{S}_a^{(k-1)}+F^{(k-1)} & \tilde{S}_a^{(k-1)}-F^{(k-1)} \\
                                                                   -S_a^{(k-1)} & S_a^{(k-1)} \\
                                                                 \end{array}
                                                               \right)\cdot Q\right)\nonumber\\
                                                                &=& \mathrm{rank}\left(\left(
                                                                 \begin{array}{cc}
                                                                 S_a^{(k-1)}+E^{(k-1)} &  \\
                                                                 \tilde{S}_a^{(k-1)} &  \\
                                                                   &\tilde{S}_a^{(k-1)}+F^{(k-1)} \\
                                                                    & S_a^{(k-1)} \\
                                                                 \end{array}
                                                               \right)\right)\nonumber\\
                                                                &=&\mathrm{rank}\left(\left(
                                                                 \begin{array}{c}
                                                                 S_a^{(k-1)}+E^{(k-1)} \\
                                                                 \tilde{S}_a^{(k-1)} \\
                                                                 \end{array}
                                                               \right)\right)+\mathrm{rank}\left(\left(
                                                                 \begin{array}{c}
                                                                  \tilde{S}_a^{(k-1)}+F^{(k-1)} \\
                                                                   S_a^{(k-1)} \\
                                                                 \end{array}
                                                               \right)\right)
                                                               \end{eqnarray}
where the two matrices $P,Q$ are respectively defined by
\begin{eqnarray*}
P=\left(\begin{array}{cccc}
                            I_{N/4} &  &  & I_{N/4} \\
                             & I_{N/4} &  &  \\
                             & -I_{N/4} & -I_{N/4} &  \\
                             &  &  & I_{N/4}\end{array}
                        \right),~Q=\left( \begin{array}{cc}
       I_{N/2} & -I_{N/2} \\
        I_{N/2} &  \\
     \end{array}
   \right).
\end{eqnarray*}

Next, we prove
\begin{equation*}
  \mathrm{rank}\left(\left(
                                                                 \begin{array}{c}
                                                                 S_a^{(k)}+E^{(k)} \\
                                                                 \tilde{S}_a^{(k)} \\
                                                                 \end{array}
                                                               \right)\right)=\mathrm{rank}\left(\left(
                                                                 \begin{array}{c}
                                                                  \tilde{S}_a^{(k)}+F^{(k)} \\
                                                                   S_a^{(k)} \\
                                                                 \end{array}
                                                               \right)\right)=N/2
\end{equation*}
for any $k\ge 2$ by the induction.

When $k=2$, the statement is easily verified. For any $k \ge 2$, suppose that
it is true. By the definition of $S^{(k+1)}_a$ and $\tilde{S}^{(k+1)}_a$  in \eqref{Eqn S}, we then have

\begin{eqnarray*}
  && \mathrm{rank}\left(\left(
                                                                 \begin{array}{c}
                                                                 S_a^{(k+1)}+E^{(k+1)} \\
                                                                 \tilde{S}_a^{(k+1)} \\
                                                                 \end{array}
                                                               \right)\right)\\
   &=&\mathrm{rank}\left(\left(
                                                                 \begin{array}{cc}
                                                                 S_a^{(k)}+E^{(k)} & E^{(k)}\\
                                                                 & \tilde{S}_a^{(k)}+F^{(k)}\\
                                                                 \tilde{S}_a^{(k)} &-F^{(k)}\\
                                                                 & S_a^{(k)}\\
                                                                 \end{array}
                                                               \right)\right)\\ &=& \mathrm{rank}\left(\left(
                                                                                                       \begin{array}{cccc}
                                                                                                         I_{N/2}&  &  & I_{N/2} \\
                                                                                                           & I_{N/2} & I_{N/2} &   \\
                                                                                                           & I_{N/2} &   &   \\
                                                                                                           &   &   & I_{N/2} \\
                                                                                                       \end{array}
                                                                                                     \right)
                                                               \left(
                                                                 \begin{array}{cc}
                                                                 S_a^{(k)}+E^{(k)} & E^{(k)}\\
                                                                 & \tilde{S}_a^{(k)}+F^{(k)}\\
                                                                 \tilde{S}_a^{(k)} &-F^{(k)}\\
                                                                 & S_a^{(k)}\\
                                                                 \end{array}
                                                               \right)\left(
                                                                        \begin{array}{cc}
                                                                          I_{N} & -I_{N} \\
                                                                           & I_{N} \\
                                                                        \end{array}
                                                                      \right)
                                                               \right) \\
   &=&\mathrm{rank}\left(\left(
                                                                 \begin{array}{cc}
                                                                 S_a^{(k)}+E^{(k)} & \\
                                                                 \tilde{S}_a^{(k)}&\\
                                                                 &\tilde{S}_a^{(k)} +F^{(k)}\\
                                                                 & S_a^{(k)}\\
                                                                 \end{array}
                                                               \right)\right)  \\
   &=&   \mathrm{rank}\left(\left(
                                                                 \begin{array}{c}
                                                                 S_a^{(k)}+E^{(k)} \\
                                                                 \tilde{S}_a^{(k)} \\
                                                                 \end{array}
                                                               \right)\right)+\mathrm{rank}\left(\left(
                                                                 \begin{array}{c}
                                                                  \tilde{S}_a^{(k)}+F^{(k)} \\
                                                                   S_a^{(k)} \\
                                                                 \end{array}
                                                               \right)\right)\\
                                                               &=&N
\end{eqnarray*}
where the last identity comes from the assumption. Similarly, we can get $\mathrm{rank}\left(\left(
                                                                 \begin{array}{c}
                                                                  \tilde{S}_a^{(k+1)}+F^{(k+1)} \\
                                                                   S_a^{(k+1)} \\
                                                                 \end{array}
                                                               \right)\right)=N$.
This completes the proof after substituted into  \eqref{Eqn_Matrx_A1}.

\end{proof}

\begin{Proposition}\label{rank Ai-A0} Given $k\ge 3$,
$\mathrm{rank}\left(\left(
                 \begin{array}{c}
                   S_a^{(k)} \\
                   \tilde{S}_a^{(k)}(A_{0}^{(k)}-A_{i}^{(k)}) \\
                 \end{array}
               \right)\right)=N/2$ for all $2\le i< k$.
\end{Proposition}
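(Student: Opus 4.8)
The plan is to argue by induction on $k\ge 3$. The structural fact that drives everything is that, for $2\le i<k$, the recursion \eqref{coding matrix} gives
\[
A_i^{(k)}=\left(\begin{array}{cc}A_{i-1}^{(k-1)} & \\ & -A_{i-1}^{(k-1)}\end{array}\right),
\]
so the problem at index $i$ and level $k$ is governed by index $i-1$ at level $k-1$: this is the induction hypothesis when $i\ge 3$, and it is Proposition \ref{rank A1-A0} when $i=2$. Consequently the smallest case $k=3$ (where only $i=2$ occurs) comes out of the same step, which bottoms out at Proposition \ref{rank A1-A0} and at the identity \eqref{Eqn_Connection} at level $2$; no separate base case is required.

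For the inductive step, fix $2\le i<k$ and put $B:=A_{i-1}^{(k-1)}$, a matrix of order $N/2$. By Property \ref{Prop_Matrix}-(i), $B^2=-I_{N/2}$, hence over $\mathbf{F}_3$ we have $(I_{N/2}-B)(I_{N/2}+B)=2I_{N/2}=-I_{N/2}$, so both $I_{N/2}\pm B$ are nonsingular with $(I_{N/2}\pm B)^{-1}=-(I_{N/2}\mp B)$. Since $A_0^{(k)}-A_i^{(k)}$ is block diagonal with diagonal blocks $I_{N/2}-B$ and $I_{N/2}+B$, expanding $S_a^{(k)}$ and $\tilde{S}_a^{(k)}$ via \eqref{Eqn S} turns the matrix in question into the $4\times 2$ block matrix
\[
M:=\left(\begin{array}{c}S_a^{(k)}\\ \tilde{S}_a^{(k)}(A_0^{(k)}-A_i^{(k)})\end{array}\right)=\left(\begin{array}{cc}S_a^{(k-1)} & E^{(k-1)}\\ 0 & \tilde{S}_a^{(k-1)}\\ \tilde{S}_a^{(k-1)}(I_{N/2}-B) & -F^{(k-1)}(I_{N/2}+B)\\ 0 & S_a^{(k-1)}(I_{N/2}+B)\end{array}\right).
\]
Because $S_a^{(k)}$ has full row rank $N/2$ (it is the top half of the full-rank matrix of Proposition \ref{full rank}), one automatically has $\mathrm{rank}(M)\ge N/2$, so only the reverse inequality needs to be shown. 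The plan is to left- and right-multiply $M$ by explicit invertible matrices, assembled blockwise from $I_{N/4}$ and $I_{N/2}\pm B$ in the same spirit as the matrices $P,Q$ used in the proof of Proposition \ref{rank A1-A0}, so as to reach
\[
\mathrm{rank}(M)=\mathrm{rank}\left(\begin{array}{c}S_a^{(k-1)}\\ \tilde{S}_a^{(k-1)}(A_0^{(k-1)}-A_{i-1}^{(k-1)})\end{array}\right)+\mathrm{rank}\left(\begin{array}{c}\tilde{S}_a^{(k-1)}\\ S_a^{(k-1)}(I_{N/2}+B)\end{array}\right).
\]
A convenient opening move is to right-multiply the two block columns of $M$ by $(I_{N/2}-B)^{-1}$ and $(I_{N/2}+B)^{-1}$, which clears the $I_{N/2}\pm B$ factors out of the bottom two block rows; the remaining row and column operations must then decouple the first two block rows from the last two.

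Granting this reduction, the first summand equals $N/4$: it is the induction hypothesis (this proposition at level $k-1$ with index $i-1\ge 2$) when $i\ge 3$, and it is Proposition \ref{rank A1-A0} at level $k-1$ when $i=2$. The second summand also equals $N/4$: from $(A_{i-1}^{(k-1)})^{-1}=-A_{i-1}^{(k-1)}=-B$ (Property \ref{Prop_Matrix}-(i)) we get $I_{N/2}+B=(A_0^{(k-1)})^{-1}-(A_{i-1}^{(k-1)})^{-1}$, and the chain of rank identities \eqref{Eqn_Connection} in the proof of Lemma \ref{repair relation parity 1 and 2} — which holds with arbitrary matrices in place of $S^{(k)},\tilde{S}^{(k)}$ and for any $1\le l<k$ — gives, at level $k-1$ with $l=i-1$,
\[
\mathrm{rank}\left(\begin{array}{c}\tilde{S}_a^{(k-1)}\\ S_a^{(k-1)}(I_{N/2}+B)\end{array}\right)=\mathrm{rank}\left(\begin{array}{c}S_a^{(k-1)}\\ \tilde{S}_a^{(k-1)}(A_0^{(k-1)}-A_{i-1}^{(k-1)})\end{array}\right)=\frac{N}{4}.
\]
Adding the two contributions yields $\mathrm{rank}(M)=N/4+N/4=N/2$, which is the assertion.

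The step I expect to be the main obstacle is the block reduction itself, that is, producing the explicit invertible left and right factors. For $i=1$ in Proposition \ref{rank A1-A0} the anti-diagonal shape of $A_1^{(k)}$ lets the two off-diagonal blocks cancel after a single pair of row and column operations; here the factors $I_{N/2}\pm B$ obstruct that cancellation, and after the normalization above one is left with off-diagonal blocks built out of $E^{(k-1)}$ and $F^{(k-1)}$. Determining when these can be cleared reduces — by writing the required containment (the row space of $\tilde{S}_a^{(k)}(A_0^{(k)}-A_i^{(k)})$ lies inside that of $S_a^{(k)}$) in $2\times 2$ block form — to the auxiliary identity
\[
G\,E^{(k-1)}=-\tilde{S}_a^{(k-1)}-F^{(k-1)}\big(A_0^{(k-1)}+A_{i-1}^{(k-1)}\big),
\]
where $G$ is the matrix uniquely determined by $G\,S_a^{(k-1)}=\tilde{S}_a^{(k-1)}(A_0^{(k-1)}-A_{i-1}^{(k-1)})$ (its existence being exactly the induction hypothesis or Proposition \ref{rank A1-A0}, and its uniqueness coming from the full row rank of $S_a^{(k-1)}$). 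I anticipate proving this identity by a secondary induction on $k$, in the same vein as the rank identity established inside the proof of Proposition \ref{rank A1-A0}; verifying it for every admissible index, and checking how it combines with the $I_{N/2}\pm B$ factors, is where the bulk of the work sits.
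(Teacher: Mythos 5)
Your strategy is in essence the paper's own: expand $S_a^{(k)}$, $\tilde{S}_a^{(k)}$ and $A_i^{(k)}$ via \eqref{coding matrix} and \eqref{Eqn S} into the same four-by-two block matrix, decouple the off-diagonal block so that the rank splits into two summands, evaluate one summand by the induction hypothesis (Proposition \ref{rank A1-A0} when $i=2$) and the other through the chain of rank identities \eqref{Eqn_Connection} applied at level $k-1$ with $l=i-1$ (which, as you note, is purely matrix-algebraic and so applies to $S_a^{(k-1)},\tilde{S}_a^{(k-1)}$). All of that bookkeeping is sound, and your reformulation of the splitting as a row-space containment, reduced blockwise to the existence of $G$ and $H_{22}$ plus the identity $G\,E^{(k-1)}=-\tilde{S}_a^{(k-1)}-F^{(k-1)}\bigl(A_0^{(k-1)}+A_{i-1}^{(k-1)}\bigr)$, is a legitimate equivalent route; the identity even checks out in small cases such as $(k,i)=(3,2),(4,2),(4,3)$.

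The genuine gap is that the crux is never carried out: you neither exhibit the invertible left/right factors that perform the decoupling nor prove the auxiliary identity for $G$ — you explicitly defer it ("I anticipate proving this identity by a secondary induction \dots is where the bulk of the work sits"). That deferred step is the entire nontrivial content of the proposition; full row rank of $S_a^{(k)}$, invertibility of $I_{N/2}\pm B$, and \eqref{Eqn_Connection} are routine. The paper supplies precisely this missing piece in closed form: with $U_j^{(k)},V_j^{(k)},W_j^{(k)}$ as in its proof it writes down explicit matrices $P^{(k)},Q^{(k)},R^{(k)}$ built from zero and $\pm I$ blocks such that $W_j^{(k)}=-U_j^{(k)}+R^{(k)}V_j^{(k)}$ for $j=2$ and $W_j^{(k)}=U_j^{(k)}Q^{(k)}-P^{(k)}V_j^{(k)}$ for $j>2$, relations verifiable directly from the recursions \eqref{Eqn EF} and \eqref{Eqn S}; block row/column operations then annihilate $W_j^{(k)}$ and yield the rank splitting without ever solving for $G$. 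Note also that the paper's relation changes form between $i=2$ and $i>2$ (because \eqref{Eqn EF} swaps $E$ and $F$ at each level), so your planned secondary induction on the single identity in $G$ will almost certainly need to be strengthened to a pair of simultaneous statements, exactly as the proof of Proposition \ref{rank A1-A0} had to strengthen its claim to two stacked matrices. Until that lemma is actually proved, what you have is a plausible outline rather than a proof.
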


\begin{proof} If $k=3$, the statement is obvious. For any $k\ge3$, assume that
it is true for all $2\le j< k$.
When $j\ge 2$, according to  the  definitions of $A_j^{(k+1)}$ in \eqref{coding matrix} and $S^{(k+1)}_a,\tilde{S}^{(k+1)}_a$  in \eqref{Eqn S},
\begin{eqnarray*}
\mathrm{rank}\left(\left(
                 \begin{array}{c}
                   S_a^{(k+1)} \\
                   \tilde{S}_a^{(k+1)}(A_{0}^{(k+1)}-A_{j}^{(k+1)}) \\
                 \end{array}
               \right)\right)
   = \mathrm{rank}\left(\left(
                 \begin{array}{c}
                   S_a^{(k+1)} \\
                   \tilde{S}_a^{(k+1)}(I_{2N}-A_{j}^{(k+1)}) \\
                 \end{array}
               \right)\right)
   = \mathrm{rank}\left(\left(
                                                       \begin{array}{cc}
                                                         U_j^{(k)} & W_j^{(k)} \\
                                                           & V_j^{(k)} \\
                                                       \end{array}
                                                     \right)
               \right)
  \end{eqnarray*}
for three $N\times N$ matrices
\begin{eqnarray*}
U_{j}^{(k)}=\left(
                 \begin{array}{c}
                   S_a^{(k)} \\
                   \tilde{S}_a^{(k)}(I_{N}-A_{j-1}^{(k)}) \\
                 \end{array}
               \right),V_{j}^{(k)}=\left(
                 \begin{array}{c}
                 \tilde{S}_a^{(k)} \\
             S_a^{(k)}(I_{N}+A_{j-1}^{(k)}) \\
                 \end{array}\right),W_{j}^{(k)}=\left(
                 \begin{array}{c}
                E^{(k)} \\
                     -F^{(k)}(I_{N}+A_{j-1}^{(k)}) \\
                 \end{array}
               \right),
\end{eqnarray*}
by the recursive definitions which satisfy
 \begin{eqnarray*}
 W_j^{(k)}=\left\{
             \begin{array}{cl}
             -U_j^{(k)} +R^{(k)}V_j^{(k)}, &\mbox{\ if\ }j=2\\
               U_j^{(k)}Q^{(k)} -P^{(k)}V_j^{(k)}, &\mbox{\ if\ }j>2
             \end{array}
           \right.
 \end{eqnarray*}
where
\begin{eqnarray*}
R^{(k)}=\left(
                          \begin{array}{cccc}
                          \textbf{0}_{N/4}&I_{N/4}&I_{N/4}&\textbf{0}_{N/4}\\
                            -I_{N/4} &  \textbf{0}_{N/4}&\textbf{0}_{N/4}  & I_{N/4} \\
                           \textbf{0}_{N/4} &\textbf{0}_{N/4}&\textbf{0}_{N/4}&I_{N/4} \\
                           \textbf{0}_{N/4}&\textbf{0}_{N/4}&-I_{N/4}&\textbf{0}_{N/4} \\
                          \end{array}
                        \right),\ \ P^{(k)}=\left(
                          \begin{array}{cccc}
                          \textbf{0}_{N/4}&\textbf{0}_{N/4}&\textbf{0}_{N/4}&\textbf{0}_{N/4}\\
                            I_{N/4} &  \textbf{0}_{N/4}&\textbf{0}_{N/4}  & \textbf{0}_{N/4} \\
                           \textbf{0}_{N/4} &\textbf{0}_{N/4}&\textbf{0}_{N/4}&\textbf{0}_{N/4} \\
                           \textbf{0}_{N/4}&\textbf{0}_{N/4}&I_{N/4}&\textbf{0}_{N/4} \\
                          \end{array}
                        \right),\ \ Q^{(k)}=\left(
            \begin{array}{cc}
             \textbf{0}_{N/2} & \textbf{0}_{N/2} \\
               I_{N/2}& \textbf{0}_{N/2} \\
            \end{array}
          \right)
\end{eqnarray*}
and $\textbf{0}_\textsf{N}$ denotes the zero matrix of order $\textsf{N}$.

Hence,
\begin{eqnarray}\label{Eqn_Matrix_11}
&&\mathrm{rank}\left(\left(
                 \begin{array}{c}
                   S_a^{(k+1)} \\
                   \tilde{S}_a^{(k+1)}(A_{0}^{(k+1)}-A_{j}^{(k+1)}) \\
                 \end{array}
               \right)\right)
   \nonumber\\ &=&  \mathrm{rank}\left(\left(
                 \begin{array}{cc}
                   S_a^{(k)}\\
                   \tilde{S}_a^{(k)}(I_{N}-A_{j-1}^{(k)}) \\
                 \end{array}
               \right)\right)+\mathrm{rank}\left(\left(
                 \begin{array}{cc}
                   \tilde{S}_a^{(k)} \\
             S_a^{(k)}(I_{N}+A_{j-1}^{(k)}) \\
                 \end{array}
               \right)\right)
\end{eqnarray}
for $j\ge 2$.

Further, note from \eqref{Eqn_MDS} that $A_{0}^{(k)}-A_{j-1}^{(k)}=I_{N}-A_{j-1}^{(k)}$ is nonsingular if $j\ge 2$.
Then,
\begin{eqnarray*}\label{Eqn_Matrix_13}
&& \mathrm{rank}\left(\left(
                 \begin{array}{cc}
                   \tilde{S}_a^{(k)} \\
             S_a^{(k)}(I_{N}+A_{j-1}^{(k)}) \nonumber\\
                 \end{array}
               \right)\right)\\ \nonumber
   &=&  \mathrm{rank}\left(\left(
                 \begin{array}{cc}
                   S_a^{(k)}\\
                   \tilde{S}_a^{(k)}(I_{N}-A_{j-1}^{(k)}) \\
                 \end{array}
               \right)\right)\\
   &=& N/2
\end{eqnarray*}
where in the first identity we use \eqref{Eqn_Connection} and the in last identity we use
the assumption if $j\ge 3$ and Proposition \ref{rank A1-A0} if $j=2$. This completes the proof after substituted into  \eqref{Eqn_Matrix_11}.

\end{proof}

The following main result  is immediate.

\begin{Theorem}\label{Thm parity 1}
$S^{(k)}_a$ and $\tilde{S}^{(k)}_a$ that defined by \eqref{Eqn EF}, \eqref{EF initial 1}, \eqref{Eqn S} and \eqref{S initial 1} are  the  repair matrices for   the first parity node of the $(k+2,k)$ Zigzag code, whose repair disk I/O  is $kN+N-k$.
\end{Theorem}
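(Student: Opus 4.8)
The plan is to recognize that the two optimality conditions have essentially already been verified by the preceding propositions, so that the theorem reduces to bookkeeping. Since $A_0^{(k)}=I_N$, Proposition \ref{full rank} is exactly condition \eqref{repair_1parity_node_requirement1}; moreover it forces the $N\times N$ matrix $\left(\begin{smallmatrix}S_a^{(k)}\\ \tilde S_a^{(k)}\end{smallmatrix}\right)$ to be invertible, so each of its two $N/2\times N$ row blocks $S_a^{(k)}$ and $\tilde S_a^{(k)}$ has rank $N/2$, i.e., both are legitimate repair matrices. Next, Proposition \ref{rank A1-A0} is condition \eqref{repair_1parity_node_requirement2} for $l=1$, and Proposition \ref{rank Ai-A0} supplies it for $2\le l<k$ (when $k=2$ nothing more is needed); together they give \eqref{repair_1parity_node_requirement2} for every $1\le l<k$. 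Hence $S_a^{(k)},\tilde S_a^{(k)}$ satisfy \eqref{repair_1parity_node_requirement1} and \eqref{repair_1parity_node_requirement2} and are valid repair matrices for the first parity node.

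It remains to evaluate the repair disk I/O, which by the remark following \eqref{repair_1parity_node_requirement2} equals $kN_1+N_2$, where $N_1,N_2$ count the nonzero columns of $S_a^{(k)},\tilde S_a^{(k)}$. I would show $N_1=N-1$ and $N_2=N$, so that $kN_1+N_2=k(N-1)+N=kN+N-k$. This I prove by induction on $k$, carrying the strengthened hypothesis: (a) $S_a^{(k)}$ has exactly one zero column, namely column $0$ (the all-zero index); (b) $\tilde S_a^{(k)}$ has no zero column; and (c) column $0$ of $F^{(k)}$ is nonzero. The base case $k=2$ is read off directly from \eqref{EF initial 1} and \eqref{S initial 1}. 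For the step I use the block forms \eqref{Eqn EF}, \eqref{Eqn S}. Writing $S_a^{(k)}=\left(\begin{smallmatrix}S_a^{(k-1)} & E^{(k-1)}\\ 0 & \tilde S_a^{(k-1)}\end{smallmatrix}\right)$: among its first $N/2$ columns the zero ones are exactly those of $S_a^{(k-1)}$ (one, column $0$), while none of the last $N/2$ columns vanishes because the $\tilde S_a^{(k-1)}$-part is never zero by (b); this yields (a) for $k$. Writing $\tilde S_a^{(k)}=\left(\begin{smallmatrix}\tilde S_a^{(k-1)} & -F^{(k-1)}\\ 0 & S_a^{(k-1)}\end{smallmatrix}\right)$: its first $N/2$ columns are nonzero by (b), and each of the last $N/2$ columns is nonzero because either the $S_a^{(k-1)}$-part is nonzero, or --- the sole exception, column $0$ of that block --- the $F^{(k-1)}$-part is nonzero by (c); this yields (b). Finally (c) for $k$ follows from $F^{(k)}=\left(\begin{smallmatrix}F^{(k-1)} & 0\\ 0 & E^{(k-1)}\end{smallmatrix}\right)$, whose column $0$ is column $0$ of $F^{(k-1)}$.

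Putting the pieces together gives $kN_1+N_2=k(N-1)+N=kN+N-k$, the claimed repair disk I/O. The only genuinely new work beyond quoting Propositions \ref{full rank}--\ref{rank Ai-A0} is the column-counting induction, and its one subtle point is maintaining item (c): it is precisely what prevents a second zero column from appearing in $\tilde S_a^{(k)}$ at the position where $S_a^{(k-1)}$ already carries its zero column, so that one does not merely obtain the weaker bound $N_2\ge N-1$ (and a correspondingly weaker I/O estimate).
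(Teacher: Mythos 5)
Your proposal is correct and follows essentially the same route as the paper: the rank conditions \eqref{repair_1parity_node_requirement1}--\eqref{repair_1parity_node_requirement2} are exactly Propositions \ref{full rank}, \ref{rank A1-A0} and \ref{rank Ai-A0}, and the I/O count $kN_1+N_2=k(N-1)+N=kN+N-k$ comes from $S_a^{(k)}$ having a single zero column and $\tilde S_a^{(k)}$ having none. The only difference is that the paper merely asserts this column count, whereas you supply the inductive verification (with the useful auxiliary invariant that column $0$ of $F^{(k)}$ is nonzero), which is a legitimate filling-in of a detail rather than a different approach.
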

\begin{proof} The optimal repair property of repair matrices $S^{(k)}_a$ and $\tilde{S}^{(k)}_a$ is obvious from Propositions \ref{full rank}, \ref{rank A1-A0} and \ref{rank Ai-A0}.

 Note that there is only one zero column in $S_a^{(k)}$ and no zero columns in $\tilde{S}_a^{(k)}$, which means $N-1$ elements should be read in each of the systematic nodes and all the $N$ elements should be read in  the second parity node to repair  the first parity node. Thus the disk I/O to repair the first parity node is $kN+N-k$.
\end{proof}

By Lemma \ref{repair relation parity 1 and 2},  the second parity node of the $(k+2,k)$ Zigzag code can also be optimally repaired. However, if we use $S_b^{(k)}A_i^{(k)},0\le i< k$ and $\tilde{S}_b^{(k)}$ as the repair matrices, where $S_b^{(k)}=\tilde{S}^{(k)}_a$ and $\tilde{S}_b^{(k)}=S^{(k)}_a$ are defined by \eqref{Eqn EF}, \eqref{EF initial 1}, \eqref{Eqn S} and \eqref{S initial 1}, then its repair disk
I/O will be $kN+N-1$  since $S_a^{(k)}$ has only one  zero column and $\tilde{S}_a^{(k)}A_i^{(k)}$ has no zero columns for $0\le i< k$. In the following, by choosing another initial values of $E^{(2)}$, $F^{(2)}$, $S_a^{(2)}$, $\tilde{S}_a^{(2)}$ in \eqref{EF initial 1} and \eqref{S initial 1}, the   disk I/O to optimally repair   the second parity node can also be reduced to $kN+N-k$.

Reset
\begin{equation}\label{EFS initial 2}
  E^{(2)}=\left(
               \begin{array}{cc}
                 -1 & 0\\
               \end{array}
             \right),\ \  F^{(2)}=\left(
               \begin{array}{cc}
                 0 & -1 \\
               \end{array}
             \right), \ \  S^{(2)}_a=\left(
               \begin{array}{cc}
                 1 & -1 \\
               \end{array}
             \right),\ \ \tilde{S}^{(2)}_a=\left(
               \begin{array}{cc}
                 0 & 1 \\
               \end{array}
             \right),
\end{equation}
then we have the following result.

\begin{Theorem}\label{Thm parity 2}
Let $S^{(k)}_a$ and $\tilde{S}^{(k)}_a$ be defined by \eqref{EFS initial 2}, \eqref{Eqn EF} and \eqref{Eqn S}, then $S_b^{(k)}A_i^{(k)},0\le i< k$ and $\tilde{S}_b^{(k)}$  are the repair matrices for  the second parity node of the $(k+2,k)$ Zigzag code where $S_b^{(k)}=\tilde{S}^{(k)}_a$ and $\tilde{S}_b^{(k)}=S^{(k)}_a$. Moreover,
the disk I/O to optimally repair the second parity node is $kN+N-k$.
\end{Theorem}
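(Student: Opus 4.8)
The plan is to mirror the structure of the proof of Theorem \ref{Thm parity 1}, but with the new initial values in \eqref{EFS initial 2}. First, by Lemma \ref{repair relation parity 1 and 2}, it suffices to show that $S_a^{(k)}$ and $\tilde{S}_a^{(k)}$ (now defined via \eqref{EFS initial 2}) are valid repair matrices for the \emph{first} parity node, i.e., that they satisfy \eqref{repair_1parity_node_requirement1 zigzag} and \eqref{repair_1parity_node_requirement2 zigzag}; then automatically $\tilde{S}_b^{(k)}=S_a^{(k)}$ and $S_b^{(k)}A_i^{(k)}$ (with $S_b^{(k)}=\tilde{S}_a^{(k)}$) repair the second parity node. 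The point is that Propositions \ref{full rank}, \ref{rank A1-A0}, and \ref{rank Ai-A0} are all proved by \emph{induction on $k$ starting from $k=2$}, and the only place the specific initial values \eqref{EF initial 1}, \eqref{S initial 1} enter is the base case. So the bulk of the work is to re-verify the three base cases ($k=2$ for Propositions \ref{full rank} and \ref{rank A1-A0}, and the auxiliary identity on $\bigl(\begin{smallmatrix}S_a^{(k)}+E^{(k)}\\ \tilde S_a^{(k)}\end{smallmatrix}\bigr)$ and $\bigl(\begin{smallmatrix}\tilde S_a^{(k)}+F^{(k)}\\ S_a^{(k)}\end{smallmatrix}\bigr)$; Proposition \ref{rank Ai-A0} has base case $k=3$, which follows from the others) with the reset matrices $E^{(2)}=(-1\ 0)$, $F^{(2)}=(0\ -1)$, $S_a^{(2)}=(1\ -1)$, $\tilde S_a^{(2)}=(0\ 1)$. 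Each of these is a rank computation for $2\times 2$ matrices over $\mathbf{F}_3$, routine to check: e.g. $\bigl(\begin{smallmatrix}S_a^{(2)}\\ \tilde S_a^{(2)}A_0^{(2)}\end{smallmatrix}\bigr)=\bigl(\begin{smallmatrix}1&-1\\0&1\end{smallmatrix}\bigr)$ has rank $2=N$, and $\bigl(\begin{smallmatrix}S_a^{(2)}\\ \tilde S_a^{(2)}(I_2-A_1^{(2)})\end{smallmatrix}\bigr)$, $S_a^{(2)}+E^{(2)}=(0\ -1)$ etc. all have the claimed ranks.

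Once the base cases check out, the inductive steps in Propositions \ref{full rank}, \ref{rank A1-A0}, and \ref{rank Ai-A0} go through \emph{verbatim}, since those steps only use the recursions \eqref{Eqn EF}, \eqref{Eqn S}, the structure of the coding matrices \eqref{coding matrix}, Property \ref{Prop_Matrix}, and \eqref{Eqn_Connection} — none of which depend on the initial values. Thus $S_a^{(k)}$ and $\tilde S_a^{(k)}$ satisfy \eqref{repair_1parity_node_requirement1}–\eqref{repair_1parity_node_requirement2}, hence by Lemma \ref{repair relation parity 1 and 2} the matrices $S_b^{(k)}A_i^{(k)}$, $0\le i<k$, and $\tilde S_b^{(k)}$ satisfy \eqref{repair_2parity_node_requirement1}–\eqref{repair_2parity_node_requirement2} and are valid repair matrices for the second parity node.

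It remains to count the disk I/O, which is the total number of nonzero columns of $\tilde S_b^{(k)}=S_a^{(k)}$ and of $S_b^{(k)}A_i^{(k)}=\tilde S_a^{(k)}A_i^{(k)}$ for $0\le i<k$. By Property \ref{Prop_Matrix}-(ii) the matrix $A_i^{(k)}$ is a signed permutation, so $\tilde S_a^{(k)}A_i^{(k)}$ has a zero column in position $i\oplus e_i$-image of a zero column of $\tilde S_a^{(k)}$; in particular $\tilde S_a^{(k)}A_i^{(k)}$ has exactly as many zero columns as $\tilde S_a^{(k)}$. So I need to track the zero-column pattern of $S_a^{(k)}$ and $\tilde S_a^{(k)}$ under the new initial values: I expect, by an easy induction on the recursion \eqref{Eqn S} together with \eqref{Eqn EF}, that with \eqref{EFS initial 2} the matrix $\tilde S_a^{(k)}$ has exactly one zero column and $S_a^{(k)}$ has none — the reverse of the situation in Theorem \ref{Thm parity 1}. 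Granting this, $\tilde S_b^{(k)}=S_a^{(k)}$ contributes $N$ read symbols (the first parity node is read in full) and each $S_b^{(k)}A_i^{(k)}$ contributes $N-1$, giving total disk I/O $N+k(N-1)=kN+N-k$, as claimed.

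The main obstacle is bookkeeping rather than conceptual: I must confirm that the recursion \eqref{Eqn S} propagates "exactly one zero column in $\tilde S_a$, none in $S_a$" from the new base case — i.e. that the off-diagonal blocks $E^{(k-1)}$ and $-F^{(k-1)}$ never create or destroy the relevant zero column, and that the single zero column of $\tilde S_a^{(2)}=(0\ 1)$ sits in a position that, after the block recursion, still yields exactly one zero column at each level (and likewise that $S_a^{(k)}$ stays zero-column-free). This is a routine but slightly delicate induction on the block structure, and it is the only genuinely new computation beyond re-checking the $2\times 2$ base cases of Propositions \ref{full rank} and \ref{rank A1-A0}.
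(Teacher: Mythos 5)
Your proposal is correct and follows essentially the same route as the paper's own (much terser) proof: re-verify that Propositions \ref{full rank}, \ref{rank A1-A0} and \ref{rank Ai-A0} still hold under the reset initial values \eqref{EFS initial 2} — which, as you observe, only requires rechecking the base cases since the inductive steps and the identities involving $R^{(k)},P^{(k)},Q^{(k)}$ use only the recursions — and then invoke Lemma \ref{repair relation parity 1 and 2} and count zero columns ($\tilde S_a^{(k)}$ has exactly one, at position $0$, and $S_a^{(k)}$ has none) to get $k(N-1)+N=kN+N-k$. Your write-up is in fact more explicit than the paper, which simply asserts these verifications can be "easily" carried out.
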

\begin{proof}
Firstly, it can be easily verified that the results in Propositions \ref{full rank}, \ref{rank A1-A0} and \ref{rank Ai-A0} are also hold for $S^{(k)}_a$ and $\tilde{S}^{(k)}_a$ defined from the initial values $E^{(2)}$, $F^{(2)}$, $S_a^{(2)}$ and $\tilde{S}_a^{(2)}$ in \eqref{EFS initial 2}. Secondly, it follows from Lemma \ref{repair relation parity 1 and 2} that $\tilde{S}_a^{(k)}A_i^{(k)},0\le i< k$ and $S_a^{(k)}$  are the  repair matrices for the second parity node of the $(k+2,k)$ Zigzag code.
\end{proof}

From Theorems \ref{Thm parity 1} and \ref{Thm parity 2}, it is seen that the disk I/O to optimally repair the parity nodes of the Zigzag code is very close to the lower bound given in Lemma \ref{I/O bound}.

Finally, we give some   examples of the repair matrices for the  parity nodes of the $(k+2,k)$ Zigzag code.
\begin{Example}
The first parity node of the $(5,3)$ Zigzag code,  $(6,4)$ Zigzag code, and $(7,5)$ Zigzag code, can be respectively optimally
repaired by the following matrices
\begin{equation*}
  S_a^{(3)}=\left(
        \begin{array}{cccc}
                    0 & 1 & 0 & -1 \\
                    0 & 0 & 1 & 1 \\
        \end{array}
      \right),\ \ \tilde{S}_a^{(3)}=\left(
        \begin{array}{cccc}
          1 & 1 & 1 & 0 \\
          0 & 0 & 0 & 1 \\
        \end{array}
      \right)
\end{equation*}

\begin{equation*}
  S_a^{(4)}=\left(
        \begin{array}{cccccccc}
          0 & 1 & 0 & -1 & 0 & -1 & 0 & 0 \\
          0 & 0 & 1 & 1 & 0 & 0 & -1 & 0 \\
          0 & 0 & 0 & 0 & 1 & 1 & 1 & 0 \\
          0 & 0 & 0& 0 & 0 & 0 & 0 & 1 \\
        \end{array}
      \right)
  ,\ \ \tilde{S}_a^{(4)}=\left(
        \begin{array}{cccccccc}
          1 & 1 & 1 & 0 & 1 & 0 & 0 & 0 \\
          0 & 0 & 0 & 1 & 0 & 0 & 0 &1 \\
          0 & 0 & 0 & 0 & 0 & 1 & 0 & -1\\
          0 &0 & 0& 0 & 0 & 0 & 1 & 1 \\
        \end{array}
      \right)
\end{equation*}

\begin{eqnarray*}
  S_a^{(5)}&=&\left(
        \begin{array}{cccccccccccccccc}
          0 & 1 & 0 & -1 & 0 & -1 & 0 & 0 &0&-1&0&0& 0 & 0 & 0& 0  \\
          0 & 0 & 1 & 1 & 0 & 0 & -1 & 0  &0&0&-1&0& 0 & 0 & 0& 0\\
          0 & 0 & 0 & 0 & 1 & 1 & 1 & 0&0&0&0&0& -1 & 0 & 0& 0 \\
          0 & 0 & 0& 0 & 0 & 0 & 0 & 1 &0&0&0&0& 0 & 0 & 0& -1\\
          0 & 0 & 0 & 0 &0 & 0 & 0 & 0&1 & 1 & 1 & 0 &1 & 0 & 0 & 0\\
          0 & 0 & 0 & 0 &0 & 0 & 0 & 0&0 & 0 & 0 & 1& 0 & 0 & 0 &1\\
          0 & 0 & 0 & 0 &0 & 0 & 0 & 0 &0 & 0 & 0 & 0 &0 & 1 & 0 & -1\\
          0 & 0 & 0 & 0 &0 & 0 & 0 & 0 &0 & 0 & 0 & 0& 0 & 0 & 1 & 1\\
        \end{array}
      \right),\\
   \tilde{S}_a^{(5)}&=&\left(
        \begin{array}{cccccccccccccccc}
          1 & 1 & 1 & 0 &1 & 0 & 0 & 0 & 1 & 0 & 0 & 0 & 0 & 0 & 0 & 0 \\
          0 & 0 & 0 & 1 &0 & 0 & 0 & 1 & 0 & 0 & 0 & 1 & 0 & 0 & 0 & 0\\
          0 & 0 & 0 & 0 &0 & 1 & 0 & -1 & 0 & 0 & 0 & 0 & 0 & 1 & 0 & 0 \\
          0 & 0 & 0 & 0 &0 & 0 & 1 & 1 & 0 & 0 & 0 & 0 & 0 & 0 & 1 & 0\\
          0 & 0 & 0 & 0 &0 & 0 & 0 & 0 & 0 & 1 & 0 & -1 & 0 & -1 & 0 & 0\\
          0 & 0 & 0 & 0 &0 & 0 & 0 & 0 & 0 & 0 & 1 & 1 & 0 & 0 & -1 & 0 \\
          0 & 0 & 0 & 0 &0 & 0 & 0 & 0 & 0 & 0 & 0 & 0 & 1 & 1 & 1 & 0\\
          0 & 0 & 0 & 0 &0 & 0 & 0 & 0 & 0 & 0 & 0 & 0 & 0 & 0 & 0 & 1\\
        \end{array}
      \right).
\end{eqnarray*}

The second parity node of the $(5,3)$ Zigzag code,  $(6,4)$ Zigzag code, and $(7,5)$ Zigzag code, can be respectively optimally
repaired by the following matrices
\begin{equation*}
  S_b^{(3)}=\left(
        \begin{array}{cccc}
                   0 & 1 & 0 & 1  \\
             0 & 0 & 1 & -1  \\
        \end{array}
      \right),\ \ \tilde{S}_b^{(3)}=\left(
        \begin{array}{cccc}
          1 & -1 & -1 & 0 \\
0 & 0 & 0 & 1  \\
        \end{array}
      \right)
\end{equation*}

\begin{equation*}
  S_b^{(4)}=\left(
        \begin{array}{cccccccc}
          0 & 1 & 0 & 1 & 0 & 1 & 0 & 0 \\
0 & 0 & 1 & -1 & 0 & 0 & 1 & 0  \\
0 & 0 & 0 & 0 & 1 & -1 & -1 & 0  \\
0 & 0 & 0 & 0 & 0 & 0 & 0 & 1 \\
        \end{array}
      \right)
  ,\ \ \tilde{S}_b^{(4)}=\left(
        \begin{array}{cccccccc}
     1 & -1 & -1 & 0 & -1 & 0 & 0 & 0  \\
0 & 0 & 0 & 1 & 0 & 0 & 0 & -1  \\
0 & 0 & 0 & 0 & 0 & 1 & 0 & 1  \\
0 & 0 & 0 & 0 & 0 & 0 & 1 & -1  \\
        \end{array}
      \right)
\end{equation*}

\begin{eqnarray*}
  S_b^{(5)}&=&\left(
        \begin{array}{cccccccccccccccc}
          0 & 1 & 0 & 1 & 0 & 1 & 0 & 0 & 0 & 1 & 0 & 0 & 0 & 0 & 0 & 0  \\
0 & 0 & 1 & -1 & 0 & 0 & 1 & 0 & 0 & 0 & 1 & 0 & 0 & 0 & 0 & 0  \\
0 & 0 & 0 & 0 & 1 & -1 & -1 & 0 & 0 & 0 & 0 & 0 & 1 & 0 & 0 & 0  \\
0 & 0 & 0 & 0 & 0 & 0 & 0 & 1 & 0 & 0 & 0 & 0 & 0 & 0 & 0 & 1  \\
0 & 0 & 0 & 0 & 0 & 0 & 0 & 0 & 1 & -1 & -1 & 0 & -1 & 0 & 0 & 0  \\
0 & 0 & 0 & 0 & 0 & 0 & 0 & 0 & 0 & 0 & 0 & 1 & 0 & 0 & 0 & -1  \\
0 & 0 & 0 & 0 & 0 & 0 & 0 & 0 & 0 & 0 & 0 & 0 & 0 & 1 & 0 & 1  \\
0 & 0 & 0 & 0 & 0 & 0 & 0 & 0 & 0 & 0 & 0 & 0 & 0 & 0 & 1 & -1  \\
        \end{array}
      \right),\\
   \tilde{S}_b^{(5)}&=&\left(
        \begin{array}{cccccccccccccccc}
          1 & -1 & -1 & 0 & -1 & 0 & 0 & 0 & -1 & 0 & 0 & 0 & 0 & 0 & 0 & 0  \\
0 & 0 & 0 & 1 & 0 & 0 & 0 & -1 & 0 & 0 & 0 & -1 & 0 & 0 & 0 & 0  \\
0 & 0 & 0 & 0 & 0 & 1 & 0 & 1 & 0 & 0 & 0 & 0 & 0 & -1 & 0 & 0  \\
0 & 0 & 0 & 0 & 0 & 0 & 1 & -1 & 0 & 0 & 0 & 0 & 0 & 0 & -1 & 0  \\
0 & 0 & 0 & 0 & 0 & 0 & 0 & 0 & 0 & 1 & 0 & 1 & 0 & 1 & 0 & 0  \\
0 & 0 & 0 & 0 & 0 & 0 & 0 & 0 & 0 & 0 & 1 & -1 & 0 & 0 & 1 & 0  \\
0 & 0 & 0 & 0 & 0 & 0 & 0 & 0 & 0 & 0 & 0 & 0 & 1 & -1 & -1 & 0  \\
0 & 0 & 0 & 0 & 0 & 0 & 0 & 0 & 0 & 0 & 0 & 0 & 0 & 0 & 0 & 1  \\
        \end{array}
      \right).
\end{eqnarray*}
\end{Example}

%
%


\end{document}